\newtheorem{thm}{Theorem}
\newtheorem{cor}[thm]{Corollary}
\newtheorem{lemma}[thm]{Lemma}
\newtheorem{exam}[thm]{Example}
\newtheorem{defn}[thm]{Definition}
\newcommand{\bra}[1]{\langle #1 |}
\newcommand{\ket}[1]{| #1 \rangle}
\newcommand{\ketbra}[2]{| #1 \rangle\langle #2 |}
\newcommand{\Tr}{{\rm Tr}}
\newcommand{\bb}[1]{\mathbb{#1}}
\newcommand{\cl}[1]{\mathcal{#1}}
\newcommand{\fA}{\mathfrak{A}}
\newcommand{\fB}{\mathfrak{B}}
\begin{document}

\title[Generalized Multiplicative Domains]{Generalized Multiplicative Domains and \\ Quantum Error Correction}

\author[N.~Johnston, D.~W.~Kribs]{Nathaniel Johnston$^1$ and David~W.~Kribs$^{1,2}$}
\address{$^1$Department of Mathematics \& Statistics, University of Guelph,
Guelph, ON, Canada N1G 2W1}
\address{$^2$Institute for Quantum Computing, University of Waterloo, Waterloo, ON, Canada
N2L 3G1}

\begin{abstract}
Given a completely positive map, we introduce a set of algebras that we refer to as its generalized multiplicative domains.
These algebras are generalizations of the traditional multiplicative domain of a completely positive map and we derive a characterization of them in the unital, trace-preserving case, in other words the case of unital quantum channels, that extends Choi's characterization of the multiplicative domains of unital maps. We also derive a characterization that is in the same flavour as a well-known characterization of bimodules, and we use these algebras to provide a new representation-theoretic description of quantum error-correcting codes that extends previous results for unitarily-correctable codes, noiseless subsystems and decoherence-free subspaces.
\end{abstract}

\maketitle

\section{Introduction}

The multiplicative domain of a completely positive map was first studied in operator theory over thirty years ago \cite{Cho74}, and very recently it was discovered that it plays a role in quantum error correction \cite{CJK08}. One of the most important and well-known results about the multiplicative domain says that if the given map is unital, then it can be characterized internally; that is, entirely in terms of the action of the map on the elements of the multiplicative domain \cite{Cho74,Paulsentext}.

In this paper, we introduce a natural generalization of the multiplicative domain to a set of algebras that we refer to as the {\em generalized multiplicative domains} of the given map. We show that, when the map is both unital and trace-preserving, each of these algebras has an internal characterization analogous to that of the standard multiplicative domain, albeit one that is slightly more delicate. Interestingly, the generalized internal characterization does not hold if trace-preservation is removed from the hypotheses. We note that generalized multiplicative domains in certain cases are bimodules, and we present a second internal characterization that is analogous to a known result that characterizes when a completely positive map is a bimodule map for a given subalgebra. We also present a third internal characterization not motivated by previous work in operator theory, but rather by quantum error correction.

Quantum error correction is one of the most important and active areas of research in quantum information \cite{Got02,KLABVZ02,NC00}, and our contribution includes a new representation theoretic description of subspace and subsystem
codes \cite{AlyKla07,Bac05,KlaSar06,KL95,KLP05,KLPL06,Pou05,ShaLid05}. On one hand, these characterizations of correctable codes provide motivation for some of the characterizations of generalized multiplicative domains that are derived. On the other hand, we feel they are of independent interest, as
they generalize several known results for noiseless subsystems and decoherence-free subspaces \cite{duan97,kempe01,Kni06,knill00,lidar98,palma96,zanardi01a,zanardi97}, including the multiplicative domain results of \cite{CJK08} and the commutant relationships of \cite{CK06,HKL04}. In particular, it was recently shown that the standard multiplicative domain of a completely positive trace-preserving (CPTP) map encodes a subclass of what are known as ``unitarily correctable codes'' \cite{CJK08,KLPL06,KS06,SMKE08} (UCC) -- we show that the generalized multiplicative domains of a CPTP map encode {\em all} of its correctable codes. Our results complement other recently-obtained descriptions of subsystem codes \cite{BKK08,BKP08,BKNPV08,CJK08,Kni06,KS06}.

In Section~\ref{sec:prelim} we will introduce the necessary mathematical notation and preliminaries. In Section~\ref{sec:multDom} the multiplicative domain, bimodules, and generalized multiplicative domains will be defined and explored. In Section~\ref{sec:qec} we will explore the connection between the generalized multiplicative domains of a map and its correctable codes.

\section{Preliminaries}\label{sec:prelim}

We will use $\mathcal{H}$ to denote a finite-dimensional Hilbert space and $\mathcal{L}(\mathcal{H})$ to denote the set of linear
operators on $\mathcal{H}$. Given a completely positive (CP) map $\phi : \mathcal{L}(\mathcal{H})
\rightarrow \mathcal{L}(\mathcal{H})$, we know that a family of operators $\phi\equiv\{E_i\}$ can be found such that
$\phi(a) = \sum_i E_i a E_i^*$ for all
$a \in\mathcal{L}(\mathcal{H})$. A CP map is {\em unital} if $\phi(I) = I$ (i.e., $\sum_i E_i E_i^* = I$) and it is {\em trace-preserving} if $\Tr(\phi(a)) = \Tr(a)$ $\forall \, a \in \cl{L}(\cl{H})$ (i.e., $\sum_i E_i^* E_i = I$). We will abbreviate CP trace-preserving maps as CPTP for brevity. CPTP maps are often referred to as {\em quantum channels} or \emph{quantum operations} in quantum information studies \cite{NC00}.

Given a CP map $\phi\equiv\{E_i\}$, we can define its dual map
$\phi^\dagger : \mathcal{L}(\mathcal{H}) \rightarrow
\mathcal{L}(\mathcal{H})$ via $\Tr(\phi(a)b) =
\Tr (a \phi^\dagger(b))$. Observe that
$\phi\equiv\{E_i\}$ if and only if
$\phi^\dagger\equiv\{E_i^*\}$, and $\phi$ is
trace-preserving if and only if $\phi^\dagger$ is unital. Strictly speaking, a CPTP map is defined on the set of trace-class operators on $\cl{H}$ and its dual map is defined on $\cl{L}(\cl{H})$. However, in the finite-dimensional case this set may be identified with $\cl{L}(\cl{H})$. We shall make use of this identification throughout the paper as it streamlines our presentation. Moreover, when it is important that the map under consideration is CPTP, capital script letters such as $\cl{E}$ and $\cl{F}$ will be used to denote it; otherwise, the Greek letter $\phi$ will be used.

For our purposes, an \emph{(operator) algebra} $\fA$ or $\fB$ will refer to a concrete finite-dimensional $C^{*}$-algebra \cite{Dav96};
that is, a set of operators $\fA$ inside $\mathcal{L}(\mathcal{H})$ for which there is an orthogonal direct
sum decomposition of the Hilbert space $\mathcal{H} = \oplus_k
(\mathcal{A}_k \otimes \mathcal{B}_k) \oplus \mathcal{K}$ such
that the algebra $\fA$ consists of all operators belonging to the
set
\begin{equation}\label{opalgform}
\fA = \oplus_k \big(I_{\mathcal{A}_k} \otimes
\mathcal{L}(\mathcal{B}_k)\big) \, \oplus \, 0_\mathcal{K},
\end{equation}
where $0_\mathcal{K}$ is the zero operator on $\mathcal{K}$ and $I_{\cl{A}_k}$ is the identity operator on $\cl{A}_k$. When it is known that the algebra's structure contains only a single summand $I_\cl{A} \otimes \mathcal{L}(\cl{B})$, we will denote it as $\fA_\mathcal{B} := I_\cl{A} \otimes \mathcal{L}(\cl{B})$.

\subsection{Representations}

Suppose $\fA$ is an algebra on $\cl{H}$. By a
\emph{representation} (or a \emph{$\ast$-homomorphism}) of $\fA$, we
mean a linear map $\pi: \fA \rightarrow \mathcal{L}(\mathcal{H})$ such that:
\begin{eqnarray*}
	\pi(ab) &=& \pi(a)\pi(b) \quad\quad \forall a,b\in\fA \\
	\pi(a^*) &=& \pi(a)^* \quad\quad\quad  \forall a\in\fA.
\end{eqnarray*}

\noindent Throughout this paper we will deal only with \emph{faithful} representations (that is, representations $\pi$ with ${\rm ker}(\pi) = \{0\}$). For every such representation $\pi$ of $\fA_\cl{B} = I_\cl{A}\otimes
\mathcal{L}(\mathcal{B})$, there is \cite{Dav96}
a positive integer $m$ and a unitary $U$ from
$\mathcal{B}^{\otimes m}$ into the range Hilbert space for $\pi$
such that
\begin{eqnarray}\label{repnform}
\pi(I_\cl{A} \otimes X) = U (I_m\otimes X) U^* \quad\quad \forall X\in\cl{L}(\cl{B}).
\end{eqnarray}

\subsection{Quantum Error Correction}\label{sec:qecprelim}

Standard quantum error correction considers quantum codes as
subspaces $\mathcal{C}\subseteq\mathcal{H}$
\cite{bennett96,Got96,KL95,shor95,steane96}. The code $\mathcal{C}$ is
said to be \emph{correctable} for $\mathcal{E}$ if there is a
CPTP map $\mathcal{R} :
\mathcal{L}(\mathcal{H})\rightarrow\mathcal{L}(\mathcal{H})$
such that $\mathcal{R}\circ \mathcal{E} \circ
\mathcal{P}_\mathcal{C} = \mathcal{P}_\mathcal{C}$, where
$\mathcal{P}_\mathcal{C}(\rho)=P_\mathcal{C} \rho P_\mathcal{C}$ and $P_{\cl{C}}$ is the orthogonal projection onto $\cl{C}$.

A generalization called ``operator quantum error correction''
\cite{KLP05,KLPL06} leads to the notion of \emph{subsystem codes}
\cite{AlyKla07,Bac05,KlaSar06,Pou05,ShaLid05}. Two Hilbert spaces
$\mathcal{A}$, $\mathcal{B}$ are \emph{subsystems} of
$\mathcal{H}$ when $\mathcal{H}$ decomposes as $\mathcal{H} =
\mathcal{C} \oplus \mathcal{C}^\perp$ with $\mathcal{C} =
\mathcal{A} \otimes \mathcal{B}$. A subsystem $\mathcal{B}$ is \emph{correctable} for
$\mathcal{E}$ if there is a CPTP map $\mathcal{R} :
\mathcal{L}(\mathcal{H})\rightarrow\mathcal{L}(\mathcal{H})$
and a CPTP map $\cl{F}_\cl{A} : \cl{L}(\cl{A})\rightarrow\cl{L}(\cl{A})$ such that $\mathcal{R}\circ \mathcal{E} \circ
\mathcal{P}_\mathcal{C} = (\mathcal{F}_\mathcal{A} \otimes {\rm
id}_\cl{B})\circ \mathcal{P}_\mathcal{C}$, where ${\rm id}_\cl{B}$ is the identity map on $\cl{B}$. As a notational convenience, given
operators $X\in\mathcal{L}(\mathcal{A})$ and
$Y\in\mathcal{L}(\mathcal{B})$, we will write $X\otimes Y$ for the
operator on $\mathcal{H}$ given by $(X\otimes Y) \oplus
0_{\mathcal{C}^\perp}$.

The following simple example is provided to help illustrate these ideas.

\begin{exam}\label{exam:ec01}
{\rm Consider the $2 \times 2$ identity operator and Pauli bit flip operator $X$ represented in the standard basis $\{ \ket{0}, \ket{1} \}$ as follows:
\begin{align*}
	I := \begin{bmatrix}1 & 0 \\ 0 & 1\end{bmatrix} \quad X := \begin{bmatrix}0 & 1 \\ 1 & 0\end{bmatrix}.
\end{align*}

Let $\cl{H}$ be a three-qubit ($8$-dimensional) Hilbert space with basis vectors $\ket{ijk} = \ket{i} \otimes \ket{j} \otimes \ket{k}$ and consider the CPTP map $\cl{E}$ given by the four Kraus operators
\begin{align*}
 \frac{1}{2} I \otimes I \otimes I, \ \ \frac{1}{2} X \otimes I \otimes I, \ \ \frac{1}{2} I \otimes X \otimes I, \ \ \frac{1}{2} I \otimes I \otimes X.
\end{align*}

\noindent Physically, this map can be interpreted as having an equal probability of not applying an error, applying a bit flip on the first qubit, applying a bit flip on the second qubit, or applying a bit flip on the third qubit.

It is not difficult to check that $\cl{C}_0 := {\rm span}\{\ket{000},\ket{111}\}$ is a correctable subspace for this map. Indeed, if we define three more subspaces
\begin{align*}
	\cl{C}_1 & := {\rm span}\{\ket{100},\ket{011}\}, \\
	\cl{C}_2 & := {\rm span}\{\ket{010},\ket{101}\}, \\
	\cl{C}_3 & := {\rm span}\{\ket{001},\ket{110}\},
\end{align*}

\noindent then some simple algebra verifies that one valid correction operation for $\cl{E}$ on $\cl{C}_0$ is defined by the following four Kraus operators:
\begin{align*}
	\cl{R} \equiv \Big\{ P_{\cl{C}_0}, \ \ (X \otimes I \otimes I) P_{\cl{C}_1}, \ \ (I \otimes X \otimes I) P_{\cl{C}_2}, \ \ (I \otimes I \otimes X) P_{\cl{C}_3} \Big\}.
\end{align*}}
\end{exam}

\section{Generalized Multiplicative Domains}\label{sec:multDom}

The {\em multiplicative domain} \cite{Cho74,Paulsentext} of a CP map $\phi : \cl{L}(\cl{H}) \rightarrow \cl{L}(\cl{H})$, denoted $MD(\phi)$, is defined as
\begin{align}\label{eq:multdom_def}\begin{split}
	MD(\phi) := \{ a \in \cl{L}(\cl{H}) : \phi(ab) = \phi(a)\phi(b) \text{ and } \phi(ba) = \phi(b)\phi(a) \ \forall \, b \in \cl{L}(\cl{H}) \}.
\end{split}\end{align}

It is clear that $MD(\phi)$ is a $C^{*}$-algebra, and hence has a structure as in Eq.~(\ref{opalgform}). The question of what role the multiplicative domain plays in quantum error correction was answered in \cite{CJK08}. In particular, it was shown that if a CPTP map $\cl{E}$ is unital, then its multiplicative domain encodes exactly its unitarily correctable codes.

Similarly, given a $C^*$-subalgebra $\fA$ of $\cl{L}(\cl{H})$, we say that $\phi$ is an {\em $\fA$-bimodule map} (or that $\fA$ is a $\phi$-bimodule) \cite{Paulsentext} if
\begin{align}\label{eq:bimodule_def}\begin{split}
	\phi(ab) = a\phi(b) \text{ and } \phi(ba) = \phi(b)a \quad \forall \, a \in \fA, b \in \cl{L}(\cl{H}).
\end{split}\end{align}

We will see later that if $\cl{E}$ is a CPTP map, then $\cl{E}$-bimodules are exactly the noiseless subsystems for $\cl{E}$. Here we will generalize the multiplicative domain and bimodules to a wider class of algebras that we will later see also have a connection with quantum error correction.

\begin{defn}\label{defn:genmultdom}Given a subalgebra $\fA$ of $\cl{L}(\cl{H})$, a representation $\pi : \fA \rightarrow \cl{L}(\cl{H})$, and a CP map $\phi : \cl{L}(\cl{H}) \rightarrow \cl{L}(\cl{H})$, we define the \emph{multiplicative domain of $\phi$ with respect to $\pi$} as
	\[
	MD_{\pi}(\phi) :=  \big\{ a \in \fA : \pi(a)\phi(b) = \phi(ab) \text{ and } \phi(b)\pi(a) = \phi(ba) \ \ \forall \, b \in \fA \big\}.
	\]
\end{defn}

The sets $MD_\pi(\phi)$ are clearly $C^{*}$-algebras. Furthermore, they generalize the standard multiplicative domain, as $\phi$ acts as a representation when restricted to $MD(\phi)$, allowing us to set $\pi := \phi|_{MD(\phi)}$ to achieve $MD_\pi(\phi) = MD(\phi)$. Similarly, they generalize $\phi$-bimodules because one can take $\pi := {\rm id}_\fA$, the identity representation on some bimodule $\fA$, to achieve $MD_\pi(\phi) = \fA$.

\subsection{Characterizing Generalized Multiplicative Domains}

Our first question is whether or not any of the well-known results on the standard multiplicative domain extend to this more general setting. In particular, we recall the following result \cite{Cho74,Paulsentext} that shows how the multiplicative domain simplifies in the unital case.

\begin{thm}\label{thm:multDomChoi}
  Let $\phi : \cl{L}(\cl{H}) \rightarrow \cl{L}(\cl{H})$ be a completely positive, unital map. Then
  \begin{align*}
   MD(\phi) = & \big\{ a \in \cl{L}(\cl{H}) : \phi(a)^{*}\phi(a) =
\phi(a^*a)\text{ and } \phi(a)\phi(a)^{*} =
\phi(aa^*)\big\}.
  \end{align*}
\end{thm}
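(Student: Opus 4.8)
The plan is to prove the two inclusions separately, with the forward inclusion being essentially trivial and the reverse inclusion carrying all the content. For the forward direction, if $a \in MD(\phi)$ then taking $b = a^*$ in the defining equations of Eq.~(\ref{eq:multdom_def}) immediately gives $\phi(a a^*) = \phi(a)\phi(a^*) = \phi(a)\phi(a)^*$ (using that $\phi$, being CP, is $*$-preserving, so $\phi(a^*) = \phi(a)^*$), and similarly $\phi(a^* a) = \phi(a)^*\phi(a)$. So that containment is free and requires no unitality.

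For the reverse inclusion, suppose $a$ satisfies $\phi(a^*a) = \phi(a)^*\phi(a)$ and $\phi(aa^*) = \phi(a)\phi(a)^*$; I must show $\phi(ab) = \phi(a)\phi(b)$ and $\phi(ba) = \phi(b)\phi(a)$ for all $b$. The standard tool here is the Schwarz inequality for unital CP maps, $\phi(c^*c) \ge \phi(c)^*\phi(c)$, equivalently the $2\times 2$ block operator matrix
\begin{align*}
\begin{bmatrix} \phi(c^*c) & \phi(c^*d) \\ \phi(d^*c) & \phi(d^*d) \end{bmatrix}
= \phi_2\!\left( \begin{bmatrix} c^* \\ d^* \end{bmatrix}\begin{bmatrix} c & d \end{bmatrix} \right) \ge \begin{bmatrix} \phi(c)^*\phi(c) & \phi(c)^*\phi(d) \\ \phi(d)^*\phi(c) & \phi(d)^*\phi(d) \end{bmatrix},
\end{align*}
where $\phi_2 = \phi \otimes \mathrm{id}_{M_2}$ is CP and unital, so this difference $D$ is positive semidefinite. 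The hypothesis says the $(1,1)$ entry of $D$ is zero when $c = a$. The key step is the elementary linear-algebra fact that a positive semidefinite block matrix with a zero diagonal block must have the corresponding off-diagonal blocks vanish as well: if $\begin{bmatrix} 0 & Y \\ Y^* & Z\end{bmatrix} \ge 0$ then $Y = 0$. Applying this with $c = a$, $d = b^*$ arbitrary, the $(1,2)$ entry of $D$ gives $\phi(a^* b^*) = \phi(a)^*\phi(b^*)$; replacing $a$ by $a^*$ (which also satisfies the hypothesis, by the second equation) and taking adjoints converts this into $\phi(ba) = \phi(b)\phi(a)$, and a symmetric manipulation yields $\phi(ab) = \phi(a)\phi(b)$.

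The main obstacle — really the only nontrivial point — is justifying the Schwarz inequality for $2 \times 2$ amplifications and the zero-diagonal-block lemma, and being careful about the bookkeeping of which variable plays which role (i.e. getting $ab$ versus $ba$ and the adjoints lined up correctly), since the multiplicative domain condition is two-sided. I would state the block-positivity fact as a small lemma or cite it, note that $\phi_2$ is unital and CP hence satisfies Schwarz, and then the rest is just substitution. It is worth remarking explicitly in the proof that unitality is used precisely to obtain the Schwarz inequality $\phi_2(T^*T) \ge \phi_2(T)^*\phi_2(T)$; without it one only gets $\phi_2(T^*T) \ge \phi_2(T)^*\phi_2(I)^{-1}\phi_2(T)$ under invertibility assumptions, which is why the clean characterization fails in the non-unital case — foreshadowing the later discussion that the generalized version genuinely needs the trace-preserving hypothesis too.
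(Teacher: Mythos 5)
The paper does not prove Theorem~\ref{thm:multDomChoi} itself; it is recalled from \cite{Cho74,Paulsentext}, and your argument via the Schwarz inequality for the unital CP amplification $\phi_2$ together with the zero-diagonal-block lemma is exactly the standard proof in those references, so your proposal is correct and takes the expected route. One small bookkeeping remark: from $\phi(a^*b^*)=\phi(a)^*\phi(b^*)$ for all $b$, taking adjoints alone already yields $\phi(ba)=\phi(b)\phi(a)$, while replacing $a$ by $a^*$ alone yields $\phi(ab)=\phi(a)\phi(b)$; performing both operations at once, as literally written, gives the identity for $a^*$ rather than $a$, which still suffices by the $a\leftrightarrow a^*$ symmetry of the hypotheses that you yourself noted.
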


A natural question to ask of the generalized multiplicative domains is whether or not they have an internal characterization analogous to that of Theorem~\ref{thm:multDomChoi}. In particular, if $\phi$ is a unital map and $a \in \fA$ is such that $\phi(a)^*\pi(a) = \phi(a^*a)$ and $\pi(a)\phi(a)^* = \phi(aa^*)$, does it follow that $a \in MD_\pi(\phi)$? We begin with a brief example to show that the answer to this question is ``no'', even if we make the additional restriction that $\phi$ must be trace-preserving.

\begin{exam}\label{counterexample}
{\rm 	Let $\cl{E} : M_4 \rightarrow M_4$ be the CPTP map defined by the following two Kraus operators in the standard basis $\{ \ket{00}, \ket{01}, \ket{10}, \ket{11} \}$:
	\begin{align*}
		E_1 := \frac{1}{2}\begin{bmatrix}1 & 0 & 1 & 0 \\ 0 & 1 & 1 & 0 \\ 1 & 1 & 0 & 0 \\ 0 & 0 & 0 & \sqrt{2}\end{bmatrix} \quad \quad E_2 := \frac{1}{2}\begin{bmatrix}1 & 0 & -1 & 0 \\ 0 & -1 & 1 & 0 \\ -1 & 1 & 0 & 0 \\ 0 & 0 & 0 & \sqrt{2}\end{bmatrix}.
	\end{align*}
	
	It is not difficult to verify that this map is unital and trace-preserving. Now define $a := \ketbra{00}{00}$ to be the rank-$1$ projection onto the first standard basis vector. Let $\pi$ be the representation of the operators $A$ supported on ${\rm span}\{\ket{00}, \ket{01}\}$ defined as follows:
	\begin{align*}
		\pi\Big( \begin{bmatrix}A & 0 \\ 0 & 0 \end{bmatrix} \Big) := \begin{bmatrix}A & 0 \\ 0 & A \end{bmatrix}.
	\end{align*}
	
	Simple algebra reveals that
	\begin{align*}
		\cl{E}\Bigg( \begin{bmatrix} c_1 & c_2 & 0 & 0 \\ c_3 & c_4 & 0 & 0 \\ 0 & 0 & 0 & 0 \\ 0 & 0 & 0 & 0 \end{bmatrix} \Bigg) = \frac{1}{2}\begin{bmatrix} c_1 & 0 & c_2 & 0 \\ 0 & c_4 & c_3 & 0 \\ c_3 & c_2 & c_1 + c_4 & 0 \\ 0 & 0 & 0 & 0 \end{bmatrix}.
	\end{align*}
	
\noindent It is then not difficult to verify that $\cl{E}(a^2) = \cl{E}(a)\pi(a) = \pi(a)\cl{E}(a)$. However, $\cl{E}(ba) = \cl{E}(b)\pi(a)$ and $\cl{E}(ab) = \pi(a)\cl{E}(b)$ if and only if the support of $b$ is contained in the support of $a$. In particular, these equations do not hold for all $b$ supported on ${\rm span}\{\ket{00}, \ket{01}\}$.}
\end{exam}

In spite of such counterexamples, there is indeed an internal characterization of $MD_\pi(\phi)$ along the lines of what we are looking for, though it is slightly more delicate than the case of the multiplicative domain. Interestingly, we require the map to be not only unital but trace-preserving as well. We also require that $a$ be positive and have full rank inside of $\fA$. The above example fails because the range of $\fA$ has dimension $2$, but $a$ has rank $1$. As a notational convenience, we shall write $\fA_{>0}$ for the set of positive operators inside $\fA$ of maximal rank.

\begin{thm}\label{thm:genChar} If $\cl{E} : \cl{L}(\cl{H}) \rightarrow \cl{L}(\cl{H})$ is a unital CPTP map and $\pi : \fA \rightarrow \cl{L}(\cl{H})$ is a representation with $\fA$ a subalgebra of $\cl{L}(\cl{H})$, then
\[
	MD_{\pi}(\cl{E}) = {\rm span}\Big\{ a \in \fA_{>0} : \cl{E}(a)\pi(a) = \cl{E}(a^2) = \pi(a)\cl{E}(a) \Big\}.
\]
\end{thm}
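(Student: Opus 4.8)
The plan is to prove the two inclusions separately, with the easy direction being $\subseteq$ and the real work being $\supseteq$.

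For the inclusion $\subseteq$, one first notes that $MD_\pi(\mathcal E)$ is a $C^*$-algebra, so it is spanned by its positive elements; and since $\mathcal E$ is unital and (from Definition~\ref{defn:genmultdom}) $\pi$ is a unital $*$-homomorphism on $\mathcal E$'s domain restricted to $MD_\pi(\mathcal E)$, it follows that $I_\fA$ --- the unit of $\fA$ --- actually lies in $MD_\pi(\mathcal E)$, so that every positive element $a$ of $MD_\pi(\mathcal E)$ can be perturbed to $a + \varepsilon I_\fA \in \fA_{>0}$ while remaining in $MD_\pi(\mathcal E)$. For such $a$, taking $b = a$ in the defining relations immediately gives $\mathcal E(a)\pi(a) = \mathcal E(a^2) = \pi(a)\mathcal E(a)$, so $a + \varepsilon I_\fA$ belongs to the spanning set on the right; letting $\varepsilon \to 0$ and spanning shows $MD_\pi(\mathcal E) \subseteq$ the right-hand side. (One should check carefully that $I_\fA \in MD_\pi(\mathcal E)$ and that the Kadison-Schwarz-type identity on positive elements propagates correctly; this is where the unitality of $\pi$ and $\mathcal E$ enters.)

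For the inclusion $\supseteq$, fix $a \in \fA_{>0}$ with $\mathcal E(a)\pi(a) = \mathcal E(a^2) = \pi(a)\mathcal E(a)$; the goal is to show $a \in MD_\pi(\mathcal E)$, i.e. $\mathcal E(ab) = \pi(a)\mathcal E(b)$ and $\mathcal E(ba) = \mathcal E(b)\pi(a)$ for all $b \in \fA$. The natural tool is a Cauchy--Schwarz / Stinespring argument: write $\mathcal E \equiv \{E_i\}$ and consider the completely positive map $\Phi$ on $M_2(\mathcal L(\mathcal H))$ applied to the positive block matrix $\left[\begin{smallmatrix} a^2 & ab \\ b^*a & b^*b\end{smallmatrix}\right]$, whose positivity (when $a$ is positive) encodes exactly the data we have. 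The identity $\mathcal E(a^2) = \mathcal E(a)\pi(a) = \pi(a)\mathcal E(a)$ should force, via positivity of $\sum_i (E_i a - \pi(a)^{1/\,}E_i)(\cdots)^*$-type expressions, that $E_i a = (\text{something involving }\pi(a))E_i$ on the appropriate subspace --- more precisely, one expects to derive $\sum_i (E_i a - X E_i)(E_i a - X E_i)^* = 0$ for a suitable $X$ built from $\pi(a)$, hence $E_i a = X E_i$ for all $i$. Here the trace-preservation hypothesis $\sum_i E_i^* E_i = I$ is what lets us convert the "diagonal" information at the single element $a$ into an operator identity $E_i a = X E_i$ valid for each Kraus operator, after which $\mathcal E(ab) = \sum_i E_i a b E_i^* = \sum_i X E_i b E_i^* = X \mathcal E(b)$ follows for all $b$, and one identifies $X$ with $\pi(a)$ on the range of $\mathcal E$ (using that $a$ has full rank in $\fA$ and $\pi$ is a representation). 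One then upgrades from positive full-rank $a$ to all of $\fA$ by $C^*$-algebra arguments: spanning, and the fact that $MD_\pi(\mathcal E)$ is itself an algebra so the span of these generators lands inside it.

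The main obstacle I anticipate is the step that turns the single-element Kadison--Schwarz equalities at $a$ into the per-Kraus-operator identity $E_i a = X E_i$ --- this is precisely the place where both unitality and trace-preservation are genuinely needed (and, as Example~\ref{counterexample} shows, where full rank of $a$ in $\fA$ cannot be dropped). Getting the algebra of the $2\times 2$ amplification right, choosing the correct operator $X$ (it will essentially be $\pi(a)$ compressed appropriately, and one must reconcile $\pi$ being defined on $\fA \subseteq \mathcal L(\mathcal H)$ rather than on $\mathcal E$'s multiplicative domain), and verifying that the resulting identities are symmetric in the two one-sided conditions, will be the technical heart of the proof; the rest is standard operator-algebra bookkeeping.
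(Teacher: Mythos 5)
Your hard inclusion is where the proposal has a genuine gap: the Cauchy--Schwarz/sum-of-squares mechanism you sketch does not produce the per-Kraus identity, and in fact cannot. The only sums of squares the hypotheses let you evaluate are, for $D_i := \pi(a)E_i - E_ia$,
\[
\sum_i D_iD_i^* \;=\; \pi(a)\Big(\sum_i E_iE_i^*\Big)\pi(a) - \pi(a)\cl{E}(a) - \cl{E}(a)\pi(a) + \cl{E}(a^2) \;=\; \pi(a)^2 - \cl{E}(a^2),
\]
using unitality and the assumed equalities $\pi(a)\cl{E}(a)=\cl{E}(a^2)=\cl{E}(a)\pi(a)$; this is not zero unless $\cl{E}$ and $\pi$ already agree on $a^2$, which fails in general (e.g.\ $\pi={\rm id}$ on a noiseless subsystem of a genuinely noisy channel, where $\cl{E}(a^2)=a^2\,\cl{E}(1_\fA)\neq a^2$). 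Replacing $D_i$ by the correct candidate $\pi(a)E_iP_\cl{C}-E_ia$ (note the identity you should be after is $E_ia=\pi(a)E_iP_\cl{C}$ as in Corollary~\ref{cor:genChar2}, not $E_ia=XE_i$ globally) yields $\pi(a)\cl{E}(1_\fA)\pi(a)-\cl{E}(a^2)$, again not controlled by the hypotheses. So no one-shot positivity argument at the single element $a$ can close the argument, and the step you yourself flag as the ``main obstacle'' is precisely the content of the theorem, left unproved. The paper's proof is of a different nature: after using the representation structure \eqref{repnform} to reduce to the case where $a$ commutes with $\pi(a)$, it fixes the \emph{minimal} eigenvalue $\lambda$ of $\pi(a)$, uses positivity of $a(a-\lambda 1_\fA)$ to show the eigenspace $\cl{V}_\lambda$ is invariant under every $E_i^*$, then a norm/dimension count combining \emph{both} $\sum_iE_iE_i^*=I$ and $\sum_iE_i^*E_i=I$ upgrades this to invariance under every $E_i$, so $\cl{V}_\lambda$ reduces all Kraus operators; the problem then splits as a direct sum and one recurses through the spectrum, finally multiplying by $a^{-1}$ inside $\fA$ (this is where $\fA_{>0}$ enters) to obtain $\cl{E}(ba)=\cl{E}(b)\pi(a)$ for all $b\in\fA$. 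Nothing in your sketch substitutes for this spectral descent.

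A secondary issue is the easy inclusion: you assert $1_\fA\in MD_\pi(\cl{E})$ ``by unitality of $\pi$ and $\cl{E}$,'' but $\pi(1_\fA)$ is merely a projection in $\cl{L}(\cl{H})$ and nothing forces $\pi(1_\fA)\cl{E}(b)=\cl{E}(b)$ for all $b\in\fA$; moreover your $\varepsilon$-perturbation leaves the cross term $\varepsilon^2\,\cl{E}(1_\fA)\pi(1_\fA)$, which equals $\varepsilon^2\,\cl{E}(1_\fA)$ only if that membership is already known. (For $a\in MD_\pi(\cl{E})$ one can verify $\pi(1_\fA)\cl{E}(a)=\cl{E}(a)=\cl{E}(a)\pi(1_\fA)$ directly, but that concerns $\cl{E}(a)$, not $\cl{E}(1_\fA)$.) The paper dispatches this direction via the fact that a $C^*$-algebra is spanned by its strictly positive elements, the defining relations at $b=a$ giving the displayed equalities; if you prefer the perturbation route, you must first establish $1_\fA\in MD_\pi(\cl{E})$ rather than assume it.
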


\begin{proof}
	The algebra $MD_{\pi}(\cl{E})$ is clearly contained in this span. Without loss of generality we may assume that $\fA = 1_n\otimes
\mathcal{L}(\cl{B})$ and $a = 1_n \otimes a_\cl{B}$. Then by Equation~\eqref{repnform} we know that there exists a unitary $U$ such that $\pi(1_n \otimes a_\cl{B}) \equiv U(1_m \otimes a_\cl{B})U^*$. Then conjugating the given equations by $U^*$ reveals that
\begin{align}\label{eq:genCharWLOG}
  \cl{U} \circ \cl{E}(a)(1_m \otimes a_\cl{B}) = \cl{U} \circ \cl{E}(a^2) = (1_m \otimes a_\cl{B})\cl{U} \circ \cl{E}(a),
\end{align}

\noindent where $\cl{U}(X) \equiv U^*XU$. We can thus assume without loss of generality that $\pi(1_n \otimes a_\cl{B}) = 1_m \otimes a_\cl{B}$ and in particular that $a$ commutes with $\pi(a)$.

  Now let $a \in \fA_{>0}$ be such that $\cl{E}(a)\pi(a) = \cl{E}(a^2) = \pi(a)\cl{E}(a)$. Use Stinespring's Dilation Theorem to write $\cl{E}(a) \equiv V^*(id_k \otimes a)V$ for some isometry $V : \cl{H} \rightarrow \cl{H}^{\otimes k}$.
	
	Let $\ket{x}$ be an arbitrary eigenvector of $\pi(a)$ with corresponding eigenvalue $\lambda$. Then we have that
	\begin{align}\label{eq:genChar01}\begin{split}
		0 & = \cl{E}(a^2)\ket{x} - \cl{E}(a)\pi(a)\ket{x} \\
		& = \cl{E}(a(a - \lambda 1_{\fA}))\ket{x} \\
		& = V^*(id_k \otimes a(a - \lambda 1_{\fA}))V\ket{x},
	\end{split}\end{align}
	
	\noindent where $1_{\fA}$ is the unit element of $\fA$. Now, if $V\ket{x}$ were in the nullspace of $id_k \otimes a(a - \lambda 1_{\fA})$ then we could multiply on the left by $V^*(id_k \otimes ba^{-1})$, where $a^{-1}$ is the inverse of $a$ inside $\fA$ and $b \in \fA$ is arbitrary, giving us
\begin{align*}
	\cl{E}(ba)\ket{x} & = \cl{E}(b)\pi(a)\ket{x}.
\end{align*}
	
\noindent Since $a \geq 0$, there exists an orthonormal basis of eigenvectors for $\pi(a)$. Because $\ket{x}$ was an arbitrary eigenvector, it would follow that $\cl{E}(ba) = \cl{E}(b)\pi(a)$ for all $b \in \fA$. It is thus enough to show that
\[
  (id_k \otimes a(a - \lambda 1_{\fA}))V\ket{x} = 0
\]

\noindent for all eigenvalue/eigenvector pairs $(\lambda, \ket{x})$ of $\pi(a)$.
	
	From now on, it will be convenient to write $V^*$ as a row operator $V^* = [ E_1 \ E_2 \ \ldots \ E_k ]$, where $\cl{E} \equiv \big\{ E_j \big\}$. Note that trace-preservation of $\cl{E}$ implies
	\begin{align}\label{eq:TPcond}
		\sum_i \| E_i\ket{x_j} \|^2 = \sum_i\bra{x_j}E_i^*E_i\ket{x_j} = 1	 \quad \forall \, j.
	\end{align}
	
\noindent Similarly, $\cl{E}$ being unital implies
	\begin{align}\label{eq:Unitalcond}
		\sum_i \| E_i^*\ket{x_j} \|^2 = \sum_i\bra{x_j}E_iE_i^*\ket{x_j} = 1	 \quad \forall \, j.
	\end{align}
	
	For now assume that we are dealing with eigenvalue/eigenvector pairs $(\lambda,\ket{x_1}), \ldots, (\lambda,\ket{x_{m}})$ that correspond to the minimal eigenvalue $\lambda$ of $\pi(a)$, which has multiplicity $m$. Let $\cl{V}_\lambda$ be the span of $\ket{x_j}, j = 1, 2, \ldots, m$. It then follows from Equation~\eqref{eq:genChar01} that $a(a - \lambda 1_{\fA})E_i^*\ket{x_j} = 0$ for all $j = 1, 2, \ldots, m$ and all $i$ because $a(a - \lambda 1_{\fA}) \geq 0$. This implies that, for all $j = 1, 2, \ldots, m$ and all $i$, $E_i^*\ket{x_j} \in \cl{V}_\lambda$ -- i.e., $\cl{V}_\lambda$ is invariant for each $E_i^*$.
	
	By using Equation~\eqref{eq:Unitalcond} $m$ times we see that we need $\sum_i \sum_{j=1}^{m} \| E_i^*\ket{x_j} \|^2 = m$. Using the facts that $\cl{V}_\lambda$ is invariant for each $E_i^*$ and ${\rm dim}(\cl{V}_\lambda) = m$, we see that this implies $\sum_i \sum_{j=1}^{m} \| E_i\ket{x_j} \|^2 \geq m$. However, because $\cl{E}$ is trace-preserving, this implies via Equation~\eqref{eq:TPcond} that $E_i\ket{x_j} \in \cl{V}_\lambda$ (i.e., $\cl{V}_\lambda$ is $E_i$-invariant) for each $i$. Hence $\cl{V}_\lambda$ is a reducing subspace for each $E_i$ and because $a$, $\pi(a)$ and $\cl{E}(a)$ commute, it then follows by the Spectral Theorem that the entire problem decomposes as a direct sum
	\begin{align*}
		\cl{E} & = \cl{E}_1 \oplus \cl{E}_2 \\
		\pi & = \pi_1 \oplus \pi_2 \\
		a & = a_1 \oplus a_2,
	\end{align*}

	\noindent where $\cl{E}_1, \pi_1 : \cl{V}_\lambda \rightarrow \cl{V}_\lambda$, $\cl{E}_2, \pi_2 : \cl{V}_\lambda^\perp \rightarrow \cl{V}_\lambda^\perp$, $a_1 \in \cl{V}_\lambda$, and $a_2 \in \cl{V}_\lambda^\perp$. It follows that
	\begin{align*}
		\cl{E}_2(a_2)\pi_2(a_2) = \cl{E}_2(a_2^2) = \pi_2(a_2)\cl{E}_2(a_2).
	\end{align*}
	
	Now we simply repeat this procedure with the smallest eigenvalue $\mu$ of $\pi_2(a_2)$ to decompose the problem into a direct sum again, and repeat this way until we have exhausted all of the eigenvalues of $\pi(a)$. It follows that $(id_k \otimes a(a - \lambda 1_{\fA}))V\ket{x} = 0$ for all eigenvalue/eigenvector pairs $(\lambda, \ket{x})$ of $\pi(a)$, and so it follows that $\cl{E}(ba) = \cl{E}(b)\pi(a)$ for all $b \in \fA$.
	
	The other required equality follows similarly by using $\pi(a)\cl{E}(a) = \cl{E}(a^2)$. The proof is completed by recalling that an algebra is spanned by its strictly positive elements.
\end{proof}

	Even though we have an internal characterization of the generalized multiplicative domains via Theorem~\ref{thm:genChar}, it would be nice to have one that did not rely on us observing only positive elements and taking linear combinations. In order to motivate our second characterization of generalized multiplicative domains, we present without proof a well-known result about bimodules \cite[Exercise 4.3 (ii)]{Paulsentext}.
	\begin{thm}\label{thm:bimodules}
		Let $\phi : \cl{L}(\cl{H}) \rightarrow \cl{L}(\cl{H})$ be completely positive and let $\fA$ be a subalgebra of $\cl{L}(\cl{H})$ containing the identity operator $1$ of $\cl{L}(\cl{H})$. Then $\phi$ is an $\fA$-bimodule map if and only if
		\begin{align*}
			\phi(1)a = \phi(a) = a\phi(1) \quad \forall \, a \in \fA.
		\end{align*}
	\end{thm}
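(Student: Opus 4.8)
The plan is to prove both implications directly from the definitions. The forward direction is immediate: if $\phi$ is an $\fA$-bimodule map, then applying Equation~\eqref{eq:bimodule_def} with $b = 1$ gives $\phi(a) = \phi(a \cdot 1) = a\phi(1)$ and $\phi(a) = \phi(1 \cdot a) = \phi(1)a$, which is exactly the desired identity. So all the work is in the converse.

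For the converse, I would assume $\phi(1)a = \phi(a) = a\phi(1)$ for all $a \in \fA$ and try to deduce the bimodule identities $\phi(ab) = a\phi(b)$ and $\phi(ba) = \phi(b)a$ for $a \in \fA$, $b \in \cl{L}(\cl{H})$. The key structural observation is that the hypothesis says every $a \in \fA$ commutes with $\phi(1)$; moreover, since $\fA$ is a $*$-algebra, $\phi(1)$ commutes with all of $\fA$, so $\phi(1) \in \fA'$ (the commutant). The natural tool is the Schwarz/Kadison inequality for completely positive maps, but here I would instead use a Stinespring dilation $\phi(x) = V^* \pi_0(x) V$ where $\pi_0$ is a $*$-representation on some larger space and $V$ a bounded operator (with $V^*V = \phi(1)$). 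The crucial step is to show that the hypothesis forces $\pi_0(a)V = V a$ for all $a \in \fA$, i.e.\ $V$ intertwines the identity representation of $\fA$ with $\pi_0|_\fA$. Once that intertwining holds, the bimodule property follows by a one-line computation: $\phi(ab) = V^*\pi_0(a)\pi_0(b)V = V^* \pi_0(a) \pi_0(b) V$, and using $\pi_0(a)^* V = \pi_0(a^*)V = V a^*$ (equivalently $V^*\pi_0(a) = a V^*$) we get $\phi(ab) = a V^* \pi_0(b) V = a\phi(b)$, and symmetrically $\phi(ba) = \phi(b)a$.

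To establish the intertwining $V^*\pi_0(a) = aV^*$, I would first prove it for positive $a \in \fA_{>0}$ (since the algebra is spanned by such elements, as used in the proof of Theorem~\ref{thm:genChar}) and argue via a purification/rank computation. Given $a = a^* \geq 0$ in $\fA$, consider the element $(a - \lambda 1)$ for $\lambda$ an eigenvalue of $a$; the hypothesis gives $\phi((a-\lambda 1)^*(a - \lambda 1)) \leq \phi(1)$-type control only after invoking the Schwarz inequality $\phi(a)^*\phi(1)^{-1}\phi(a) \leq \phi(a^*a)$ on the range (valid since $\phi(1) = V^*V$ is the relevant "unit"), but a cleaner route is: since $\phi(a) = a\phi(1) = \phi(1)a$ with $a$ and $\phi(1)$ commuting self-adjoint operators, and since $\phi$ is positive, one shows $\phi(a^2) = a^2 \phi(1)$ by reducing to the spectral projections of $a$ — for a spectral projection $p \leq 1$ in $\fA$, $p\phi(1) = \phi(p) \geq 0$ and $p\phi(1) = (p\phi(1)p) + \text{off-diagonal}$, and positivity of $\phi(p)$ together with $\phi(1-p) = (1-p)\phi(1) \geq 0$ forces the off-diagonal blocks to vanish, giving $\phi(p) = p\phi(1)p$ and hence $V$ maps $\operatorname{ran}(p)$-type subspaces compatibly. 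Iterating over spectral projections recovers $\phi(a^2) = a^2\phi(1)$ for all self-adjoint $a \in \fA$, which is precisely the condition (via the Stinespring picture, $\|(\pi_0(a)V - Va)\ket{\psi}\|^2 = \bra{\psi}(\phi(a^2) - a\phi(a) - \phi(a)a + a\phi(1)a)\ket{\psi} = \bra{\psi}(a^2\phi(1) - 2a^2\phi(1) + a^2\phi(1))\ket{\psi} = 0$) that gives the intertwining on a dense set, hence everywhere.

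\textbf{Main obstacle.} I expect the delicate point to be the middle step: upgrading the commuting relation $\phi(a) = a\phi(1) = \phi(1)a$ to the quadratic relation $\phi(a^2) = a^2\phi(1)$. This does not follow from linearity and must exploit complete positivity (the off-diagonal-block vanishing argument above, or equivalently a Schwarz inequality applied in the GNS/Stinespring space with $\phi(1)$ playing the role of the identity). The hypothesis that $1 \in \fA$ is essential here — it is what makes $\phi(1)$ available as the "reference" element against which the Stinespring dilation is normalized — and any proof that forgets to use it will fail, just as the trace-preservation hypothesis was essential in Theorem~\ref{thm:genChar}.
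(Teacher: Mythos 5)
The paper does not actually prove Theorem~\ref{thm:bimodules}; it is quoted without proof from Paulsen's book (Exercise 4.3(ii)), so your proposal has to be judged on its own. Its skeleton is the standard one and it does work: the forward direction is the trivial substitution $b=1$, and for the converse one takes a Stinespring dilation $\phi(x)=V^*\pi_0(x)V$, proves the intertwining $\pi_0(a)V=Va$ for $a\in\fA$ via the norm computation you display, and then reads off $\phi(ab)=a\phi(b)$ and $\phi(ba)=\phi(b)a$ for every $b\in\cl{L}(\cl{H})$.

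However, the ``main obstacle'' you single out is not an obstacle, and the entire middle paragraph (Schwarz inequality, spectral projections, off-diagonal blocks, reduction to $\fA_{>0}$) is unnecessary. You claim that passing from $\phi(a)=a\phi(1)=\phi(1)a$ to $\phi(a^2)=a^2\phi(1)$ ``does not follow from linearity and must exploit complete positivity.'' In fact it follows from nothing more than the hypothesis itself, because $\fA$ is a subalgebra: $a^2$ (and, for general $a$, $a^*a$) lies in $\fA$, so the assumed identity applied to the element $a^*a$ gives $\phi(a^*a)=\phi(1)a^*a$ outright; combined with the fact that $\phi(1)$ commutes with every element of $\fA$ (apply the hypothesis to $a$ and to $a^*$), one gets $a^*\phi(1)a=\phi(1)a^*a=\phi(a^*a)$, and your displayed expansion of $\|(\pi_0(a)V-Va)\ket{\psi}\|^2$ collapses to zero with no positivity argument at all. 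Complete positivity enters exactly once, to guarantee that the Stinespring dilation exists. Relatedly, your closing claim that $1\in\fA$ is what the argument hinges on is overstated: $\phi(1)$ is available simply because $\phi$ is defined on all of $\cl{L}(\cl{H})$, and the intertwining computation never invokes $1\in\fA$. So the proof is correct once the detour is deleted, but your diagnosis of where the difficulty lies --- and the suggested analogy with the role of trace preservation in Theorem~\ref{thm:genChar} --- is mistaken; unlike Theorem~\ref{thm:genChar}, this statement is soft.
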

	
	Because we have seen that the generalized multiplicative domains are really bimodules in the case $\pi = {\rm id_\fA}$, we might na\"{i}vely expect that replacing $a$ by $\pi(a)$ and $1$ by $1_\fA$ on the left and right hand sides in Theorem~\ref{thm:bimodules} will give us an analogous result for generalized multiplicative domains. This intuition leads to the following characterization theorem, which we will see in Section~\ref{sec:qec} has implications in quantum error correction.

\begin{thm}\label{thm:genChar2} Let $\cl{E} : \cl{L}(\cl{H}) \rightarrow \cl{L}(\cl{H})$ be a unital CPTP map and let $\pi : \fA \rightarrow \cl{L}(\cl{H})$ be a representation with $\fA$ a subalgebra of $\cl{L}(\cl{H})$. Then
\[
	MD_{\pi}(\cl{E}) = \Big\{ a \in \fA : \cl{E}(1_\fA)\pi(a) = \cl{E}(a) = \pi(a)\cl{E}(1_\fA) \Big\}.
\]
\end{thm}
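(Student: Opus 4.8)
The plan is to prove the two inclusions separately, with the forward inclusion (that every element of $MD_\pi(\cl{E})$ satisfies the three-term identity) being straightforward and the reverse inclusion requiring the bulk of the work via a reduction to Theorem~\ref{thm:genChar}.

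For the forward inclusion, suppose $a \in MD_\pi(\cl{E})$. Taking $b = 1_\fA$ in both defining equations of $MD_\pi(\cl{E})$ immediately gives $\pi(a)\cl{E}(1_\fA) = \cl{E}(a \cdot 1_\fA) = \cl{E}(a)$ and $\cl{E}(1_\fA)\pi(a) = \cl{E}(1_\fA \cdot a) = \cl{E}(a)$, so $a$ lies in the right-hand set. (Here one should note that $1_\fA$, the unit of $\fA$, acts as a left and right identity on all of $\fA$, so $a \cdot 1_\fA = a = 1_\fA \cdot a$ for $a \in \fA$; in the notation of Eq.~\eqref{opalgform} with $\fA = 1_n \otimes \cl{L}(\cl{B})$ this is the statement that $1_\fA = 1_n \otimes I_\cl{B}$.)

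For the reverse inclusion, I would first reduce to the case $\fA = 1_n \otimes \cl{L}(\cl{B})$ and, using Eq.~\eqref{repnform} together with the conjugation trick $\cl{U}(X) = U^*XU$ exactly as in the proof of Theorem~\ref{thm:genChar}, assume $\pi(1_n \otimes X) = 1_m \otimes X$ so that $\pi$ acts trivially and every element of $\fA$ commutes with its image under $\pi$; in particular $\cl{E}(1_\fA)$ commutes with $\pi(a)$ for all $a$. Now suppose $a \in \fA$ satisfies $\cl{E}(1_\fA)\pi(a) = \cl{E}(a) = \pi(a)\cl{E}(1_\fA)$. The strategy is to show that every such $a$ lies in $MD_\pi(\cl{E})$; since the set of such $a$ is linearly closed and $MD_\pi(\cl{E})$ is an algebra, it suffices by Theorem~\ref{thm:genChar} to handle positive full-rank $a \in \fA_{>0}$ and verify the single identity $\cl{E}(a)\pi(a) = \cl{E}(a^2) = \pi(a)\cl{E}(a)$. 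For such an $a$, write $a$ in its spectral decomposition inside $\fA$; since the eigenprojections of $a$ are themselves positive elements of $\fA$ (generally not full-rank, but their pairwise sums can be arranged to exhaust enough full-rank elements, or one passes directly to the projection calculus), the key is to establish the three-term identity for spectral projections $p$ of $a$: namely $\cl{E}(1_\fA)\pi(p) = \cl{E}(p) = \pi(p)\cl{E}(1_\fA)$. This should follow by applying the hypothesis to the full-rank element $a + \epsilon(1_\fA - p)$ for a continuum of $\epsilon$, or more cleanly by noting the hypothesis is equivalent to $\cl{E}(1_\fA)\pi(x) = \cl{E}(x) = \pi(x)\cl{E}(1_\fA)$ holding for all $x$ in the $*$-algebra generated by $a$, hence for its spectral projections. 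Once one knows $\cl{E}(p) = \cl{E}(1_\fA)\pi(p) = \pi(p)\cl{E}(1_\fA)$ for projections $p \leq q$ in $\fA$, complete positivity and unitality force a Schwarz-type argument: $\cl{E}(p) = \cl{E}(p)\cl{E}(1_\fA)^{-?}$... — more robustly, one writes $\cl{E}(x) = V^*(id_k \otimes x)V$ by Stinespring and argues that the identities $\cl{E}(1_\fA)\pi(p) = \cl{E}(p) = \pi(p)\cl{E}(1_\fA)$ for the orthogonal family of eigenprojections of $a$ force the Kraus operators of $\cl{E}$ to reduce the eigenspaces of $\pi(a)$, which is precisely the invariance conclusion reached midway through the proof of Theorem~\ref{thm:genChar}; from there $\cl{E}(a^2) = \cl{E}(a)\pi(a) = \pi(a)\cl{E}(a)$ follows by the same direct-sum decomposition argument.

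The main obstacle I anticipate is the passage from the global identity $\cl{E}(1_\fA)\pi(a) = \cl{E}(a) = \pi(a)\cl{E}(1_\fA)$ to the corresponding identities on the spectral projections of $a$, and then extracting invariance of the eigenspaces of $\pi(a)$ under the Kraus operators $E_i$. The cleanest route is probably to observe that the hypothesis says $\cl{E}$ agrees, on the abelian $*$-subalgebra generated by $a$ and $1_\fA$, with the map $x \mapsto \cl{E}(1_\fA)\pi(x)$, and that $\cl{E}(1_\fA)$ is a positive contraction (unitality of $\cl{E}^\dagger$ and trace-preservation give $\cl{E}(1_\fA) \leq 1$ with equality of relevant traces) which commutes with every $\pi(x)$; then a polynomial functional-calculus argument transfers the identity to projections, and the trace-preservation bookkeeping of Eqs.~\eqref{eq:TPcond}--\eqref{eq:Unitalcond} applied eigenspace-by-eigenspace yields the reduction exactly as in Theorem~\ref{thm:genChar}. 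Once the reduction to positive full-rank elements with the single identity is in place, Theorem~\ref{thm:genChar} closes the argument, and the final sentence recalls that $MD_\pi(\cl{E})$, being a $C^*$-algebra, is the span of its positive full-rank elements, so the two sets coincide.
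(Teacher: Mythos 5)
Your forward inclusion and the overall plan (reduce to positive elements, then reuse the machinery of Theorem~\ref{thm:genChar}) match the paper, but the central step of your reverse inclusion has a genuine gap. You want to pass from the single-element hypothesis $\cl{E}(1_\fA)\pi(a)=\cl{E}(a)=\pi(a)\cl{E}(1_\fA)$ to the same identity for the spectral projections of $a$ (equivalently, for the whole $*$-algebra generated by $a$), and all three justifications you offer fail: the hypothesis is only linear in $a$, and $\cl{E}$ is not multiplicative on the algebra generated by $a$, so no functional-calculus or polynomial argument can transfer the identity from $a$ to $p$ (knowing $\cl{E}(a)$ and $\cl{E}(1_\fA)$ tells you nothing about $\cl{E}(p)$); the perturbation $a+\epsilon(1_\fA-p)$ is not known to satisfy the hypothesis unless $p$ already does, which is circular; and the claim that the hypothesis ``is equivalent to'' the identity on the generated algebra is essentially the theorem itself. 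Indeed, Example~\ref{exam:needTP} shows that such a transfer cannot be a formal algebraic fact---it genuinely requires unitality \emph{and} trace preservation, i.e.\ the eigenspace bookkeeping of Eqs.~\eqref{eq:TPcond}--\eqref{eq:Unitalcond}, so this step cannot be waved through. A secondary, fixable omission: ``the set is linearly closed'' does not by itself give that it is spanned by positive full-rank elements; one needs $*$-closure (the Hermitian/skew-Hermitian splitting of Lemma~\ref{lem:genChar02}) plus the shift by $r1_\fA$ (Lemma~\ref{lem:genChar03}), which is exactly why the paper proves those two lemmas first.

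The repair is to drop the spectral-projection detour entirely and run the eigenvector argument of Theorem~\ref{thm:genChar} directly on $a$ with the new hypothesis, which is what the paper does. After the same WLOG reduction, for an eigenvector $\ket{x}$ of $\pi(a)$ with \emph{minimal} eigenvalue $\lambda$ the hypothesis gives $0=\cl{E}(a)\ket{x}-\cl{E}(1_\fA)\pi(a)\ket{x}=V^*(id_k\otimes(a-\lambda 1_\fA))V\ket{x}$, and since $a-\lambda 1_\fA\geq 0$ this forces $(a-\lambda 1_\fA)E_i^*\ket{x}=0$, i.e.\ invariance of the minimal eigenspace under each $E_i^*$; the unital-plus-trace-preserving counting argument then gives $E_i$-invariance, the problem splits as a direct sum, and induction over the eigenvalues finishes the invariance claim. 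At that point one multiplies $(id_k\otimes(a-\lambda 1_\fA))V\ket{x}=0$ on the left by $V^*(id_k\otimes b)$ to get $\cl{E}(ba)\ket{x}=\cl{E}(b)\pi(a)\ket{x}$ for every $b\in\fA$, which (together with the symmetric identity) puts $a$ directly into $MD_\pi(\cl{E})$---there is no need to first manufacture the identity $\cl{E}(a)\pi(a)=\cl{E}(a^2)=\pi(a)\cl{E}(a)$ and invoke Theorem~\ref{thm:genChar} as a black box.
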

	
	In order to prove Theorem~\ref{thm:genChar2}, we first must prove a pair of lemmas.
	
\begin{lemma}\label{lem:genChar02}
	Suppose $\phi : \cl{L}(\cl{H}) \rightarrow \cl{L}(\cl{H})$ is a CP map, $\pi : \fA \rightarrow \cl{L}(\cl{H})$ is a representation, and $a \in \fA$ is such that $\phi(1_\fA)\pi(a) = \phi(a) = \pi(a)\phi(1_\fA)$. If $b, c \in \fA$ are Hermitian such that $a = b + ic$ then $\phi(1_\fA)\pi(b) = \phi(b) = \pi(b)\phi(1_\fA)$ and $\phi(1_\fA)\pi(c) = \phi(c) = \pi(c)\phi(1_\fA)$.
\end{lemma}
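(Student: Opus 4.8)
The statement asks us to show that the ``bimodule-type'' condition $\phi(1_\fA)\pi(a) = \phi(a) = \pi(a)\phi(1_\fA)$ respects the decomposition of $a$ into its Hermitian and anti-Hermitian parts. The plan is to exploit the fact that $\phi$, $\pi$, and the taking of adjoints are all real-linear operations, together with the self-adjointness of $1_\fA$ and the $\ast$-homomorphism property $\pi(x^*) = \pi(x)^*$.

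First I would write $a = b + ic$ with $b = b^*$, $c = c^*$, both in $\fA$, so that $a^* = b - ic$. Applying the hypothesis to $a$ and taking adjoints of the three equalities $\phi(1_\fA)\pi(a) = \phi(a) = \pi(a)\phi(1_\fA)$ gives $\pi(a)^*\phi(1_\fA)^* = \phi(a)^* = \phi(1_\fA)^*\pi(a)^*$. Since $\phi$ is positive (in particular adjoint-preserving, as $\phi(x^*) = \phi(x)^*$ for any CP map) and $1_\fA = 1_\fA^*$, we have $\phi(1_\fA)^* = \phi(1_\fA)$; and $\pi(a)^* = \pi(a^*)$. Also $\phi(a)^* = \phi(a^*)$. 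Hence the adjointed equations read $\phi(1_\fA)\pi(a^*) = \phi(a^*) = \pi(a^*)\phi(1_\fA)$, i.e.\ the same hypothesis holds with $a$ replaced by $a^*$.

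Now I would simply add and subtract: from $\phi(1_\fA)\pi(a) = \phi(a) = \pi(a)\phi(1_\fA)$ and $\phi(1_\fA)\pi(a^*) = \phi(a^*) = \pi(a^*)\phi(1_\fA)$, linearity of $\phi$ and $\pi$ gives $\phi(1_\fA)\pi\!\big(\tfrac{1}{2}(a+a^*)\big) = \phi\!\big(\tfrac{1}{2}(a+a^*)\big) = \pi\!\big(\tfrac{1}{2}(a+a^*)\big)\phi(1_\fA)$, and likewise with $\tfrac{1}{2i}(a - a^*)$ in place of $\tfrac12(a+a^*)$. Since $b = \tfrac12(a+a^*)$ and $c = \tfrac{1}{2i}(a-a^*)$, this is exactly the desired pair of conclusions. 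No substantive obstacle arises here — the only point requiring care is to record explicitly that a CP map and a $\ast$-representation are adjoint-preserving and that $1_\fA$ is self-adjoint, so that taking adjoints of the hypothesis returns a statement of the same form; everything else is bookkeeping with real and imaginary parts.
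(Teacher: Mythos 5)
Your proof is correct, and it takes a slightly different route from the paper's. You observe that the hypothesis is stable under taking adjoints: since $\phi$ and $\pi$ are adjoint-preserving and $1_\fA = 1_\fA^*$, the three equalities for $a$ yield the same three equalities for $a^*$, and then the conclusions for $b = \tfrac{1}{2}(a+a^*)$ and $c = \tfrac{1}{2i}(a-a^*)$ follow at once by linearity of $\phi$ and $\pi$ (note this uses complex linearity, not merely real linearity as you phrase it, since the coefficient $\tfrac{1}{2i}$ appears; this is harmless because both maps are complex-linear). The paper instead argues in two stages by splitting operator identities into Hermitian and skew-Hermitian parts: first it uses $\phi(1_\fA)\pi(a) = \pi(a)\phi(1_\fA)$ to show that a Hermitian operator equals a skew-Hermitian one, forcing both commutators $[\phi(1_\fA),\pi(b)]$ and $[\phi(1_\fA),\pi(c)]$ to vanish, and then, knowing $\phi(1_\fA)\pi(b)$ and $\phi(1_\fA)\pi(c)$ are Hermitian, it compares Hermitian and skew-Hermitian parts of $\phi(a) = \phi(1_\fA)\pi(a)$ to conclude. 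The two arguments rest on the same underlying facts (adjoint-preservation of CP maps and $\ast$-representations), but yours packages them more symmetrically and dispatches all six equalities in one step, whereas the paper's version makes the commutation relations explicit as an intermediate product, which is conceptually in line with how the lemma is used afterwards.
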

\begin{proof}
	By hypothesis, we have that $\phi(1_\fA)\pi(a) = \pi(a)\phi(1_\fA)$, and hence
	\begin{align*}
		i(\phi(1_\fA)\pi(c) - \pi(c)\phi(1_\fA)) = \pi(b)\phi(1_\fA) - \phi(1_\fA)\pi(b).
	\end{align*}
	
\noindent Since the left hand side is Hermitian and the right hand side is skew-Hermitian, both sides must equal zero. Thus $\phi(1_\fA)\pi(b) = \pi(b)\phi(1_\fA)$ and $\phi(1_\fA)\pi(c) = \pi(c)\phi(1_\fA)$. Notice that this implies $\phi(1_\fA)\pi(b)$ and $\phi(1_\fA)\pi(c)$ are both Hermitian. By using the hypotheses of the lemma again, we see that $\phi(a) = \phi(1_\fA)\pi(a)$ implies
	\[
		\phi(b) + i\phi(c) = \phi(1_\fA)\pi(b) + i\phi(1_\fA)\pi(c).
	\]
	
\noindent As $\phi(1_\fA)\pi(b)$ and $\phi(1_\fA)\pi(c)$ are Hermitian, it follows that $\phi(b) = \phi(1_\fA)\pi(b)$ and $\phi(c) = \phi(1_\fA)\pi(c)$, completing the proof.
\end{proof}

\begin{lemma}\label{lem:genChar03}
	If $\phi : \cl{L}(\cl{H}) \rightarrow \cl{L}(\cl{H})$ is a CP map and $\pi : \fA \rightarrow \cl{L}(\cl{H})$ is a representation then the set
	$$
	  \Big\{ a \in \fA : \phi(1_\fA)\pi(a) = \phi(a) = \pi(a)\phi(1_\fA) \Big\}
	$$
	\noindent is spanned by its positive elements.
\end{lemma}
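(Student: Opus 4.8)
The plan is to reduce to the Hermitian case via Lemma~\ref{lem:genChar02} and then decompose a Hermitian element into a difference of positive elements, checking that the defining condition is preserved under this decomposition. First, let $S$ denote the set in question and let $a \in S$. Writing $a = b + ic$ with $b,c \in \fA$ Hermitian (their real and imaginary parts, which lie in $\fA$ since $\fA$ is a $\ast$-algebra), Lemma~\ref{lem:genChar02} tells us that $b, c \in S$. So it suffices to show that every Hermitian element of $S$ is a linear combination of positive elements of $S$.

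Next I would take a Hermitian $b \in S$ and use the fact that $b$ commutes with $\phi(1_\fA)$, which was established inside the proof of Lemma~\ref{lem:genChar02} (the relation $\phi(1_\fA)\pi(b) = \pi(b)\phi(1_\fA)$ together with $\phi(b) = \phi(1_\fA)\pi(b)$). Choose a real scalar $t$ large enough that $b + t\,1_\fA \geq 0$; then $b + t\,1_\fA \in \fA_{\geq 0}$. The key claim is that both $b + t\,1_\fA$ and $t\,1_\fA$ lie in $S$, so that $b = (b + t\,1_\fA) - t\,1_\fA$ exhibits $b$ as a difference of positive elements of $S$. That $t\,1_\fA \in S$ is immediate since $\pi$ is a representation: $\phi(1_\fA)\pi(t\,1_\fA) = t\,\phi(1_\fA)\pi(1_\fA) = t\,\phi(1_\fA) = \phi(t\,1_\fA)$, using $\pi(1_\fA) = 1_{\pi(\fA)}$ (a representation sends the unit of $\fA$ to a projection acting as the identity on the range of $\pi$, and $\phi(1_\fA)$ has range inside that of $\pi$ by the commutation just noted — alternatively one checks $\phi(1_\fA)\pi(1_\fA) = \phi(1_\fA)$ directly from $\phi(1_\fA) = \pi(1_\fA)\phi(1_\fA)$, the $a = 1_\fA$ case being trivial). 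That $b + t\,1_\fA \in S$ then follows by linearity of $\phi$ and $\pi$: adding the identity $\phi(1_\fA)\pi(b) = \phi(b) = \pi(b)\phi(1_\fA)$ to $t$ times the corresponding identity for $1_\fA$ gives $\phi(1_\fA)\pi(b + t\,1_\fA) = \phi(b + t\,1_\fA) = \pi(b + t\,1_\fA)\phi(1_\fA)$.

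The main subtlety — and the only place one must be slightly careful — is handling $\pi(1_\fA)$: $1_\fA$ need not be the identity of $\cl{L}(\cl{H})$ and $\pi(1_\fA)$ need not equal $1$ either, so the cancellation $\phi(1_\fA)\pi(1_\fA) = \phi(1_\fA)$ must be justified rather than assumed. I expect this to come for free from the hypothesis applied with $a = 1_\fA$ (trivially true), or from the observation that $\pi(1_\fA)$ acts as a unit for $\pi(\fA)$ and $\phi(1_\fA)\pi(a) \in \pi(\fA)\cdot(\text{image})$; in any case it is a one-line check. Once that is in hand, the lemma assembles immediately: $S$ is closed under taking Hermitian parts by Lemma~\ref{lem:genChar02}, and each Hermitian element is a difference of two positive elements of $S$, so $S = \spn\{\, a \in S : a \geq 0 \,\}$.
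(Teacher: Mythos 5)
Your route is the same as the paper's (reduce to Hermitian elements via Lemma~\ref{lem:genChar02}, then shift by a real multiple of $1_\fA$), but the step you yourself single out as the only subtlety is exactly where the argument has a genuine gap, and neither of the justifications you offer for it works. You need $t\,1_\fA$ to lie in the set, i.e. $\phi(1_\fA)\pi(1_\fA) = \phi(1_\fA) = \pi(1_\fA)\phi(1_\fA)$. Calling this ``the hypothesis applied with $a = 1_\fA$ (trivially true)'' is circular: the displayed relation is the defining condition of the set, not an identity, and nothing in the hypotheses forces $1_\fA$ to satisfy it. Your alternative argument --- that $\phi(1_\fA)$ commutes with $\pi(\fA)$ and hence has range inside the range of $\pi(1_\fA)$ --- is a non sequitur: commutation says nothing about support. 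For instance, take $\cl{H} = \bb{C}^2$, $\fA = \bb{C}\,\ketbra{0}{0}$, $\pi(\lambda\ketbra{0}{0}) = \lambda\ketbra{1}{1}$ and $\phi = {\rm id}$; then $\phi(1_\fA)$ commutes with all of $\pi(\fA)$, yet $\phi(1_\fA)\pi(1_\fA) = 0 \neq \phi(1_\fA)$. What you actually need is the extra condition that $\phi(1_\fA)$ be supported on the range of the projection $\pi(1_\fA)$ (equivalently, that $1_\fA$ itself belongs to the set, or that $\pi(1_\fA) = I$); this does not follow from ``$\phi$ CP and $\pi$ a representation.''

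In fairness, the paper's own two-line proof makes the identical leap (it declares $\phi(1_\fA)\pi(a + r1_\fA) = \phi(a + r1_\fA)$ ``trivial,'' which presupposes the same identity), so your proposal faithfully reproduces the intended argument; but the point is substantive, not cosmetic, because without the support condition the shift is simply unavailable and the conclusion itself can fail. For example, on $\cl{H} = \bb{C}^4$ with $P_j := \ketbra{j}{j}$, take $\fA = \bb{C}P_1 \oplus \bb{C}P_2$, $\pi$ the inclusion, and $\cl{E}(X) = \sum_{j=1}^4 \bra{j}X\ket{j}\,\rho_j$ with $\rho_1 = \tfrac12(P_1+P_3)$, $\rho_2 = \tfrac12(P_2+P_3)$, $\rho_3 = \tfrac12(P_1+P_4)$, $\rho_4 = \tfrac12(P_2+P_4)$; this $\cl{E}$ is even unital and trace-preserving, yet for $a = \alpha P_1 + \beta P_2$ the relations $\cl{E}(1_\fA)\pi(a) = \cl{E}(a) = \pi(a)\cl{E}(1_\fA)$ hold precisely when $\alpha + \beta = 0$, so the set equals $\bb{C}(P_1 - P_2)$ and contains no nonzero positive element. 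So the correct conclusion of your analysis should have been that the lemma requires an additional hypothesis of the sort just described (which does hold in the paper's applications, e.g. in Theorem~\ref{thm:qec_main}(2) the relation is assumed for all of $\fA_\cl{B}$, in particular for $1_{\fA_\cl{B}} = P_\cl{C}$), not that the missing verification is a one-line check.
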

\begin{proof}
	It follows from Lemma~\ref{lem:genChar02} that the given set is spanned by its Hermitian elements. To see that it is spanned by its positive elements, simply note that if $\phi(1_\fA)\pi(a) = \phi(a)$ then, for any $r \in \bb{R}$, we trivially have $\phi(1_\fA)\pi(a + r 1_\fA) = \phi(a + r 1_\fA)$ (and similar for the other equality).
\end{proof}

\begin{proof}[Proof of Theorem~\ref{thm:genChar2}]
	The ``$\subseteq$'' direction of the proof is trivial.
	
	For the ``$\supseteq$'' direction of the proof, let $a \in \fA$ be such that $\cl{E}(1_\fA)\pi(a) = \cl{E}(a) = \pi(a)\cl{E}(1_\fA)$. By Lemma~\ref{lem:genChar03} we can assume that $a \geq 0$. The rest of the proof proceeds almost identically to the proof of Theorem~\ref{thm:genChar}.
\end{proof}

One question that naturally arises at this point is whether or not the map $\cl{E}$ really needs to be trace-preserving in order for the internal characterizations of Theorems~\ref{thm:genChar} and \ref{thm:genChar2} to work; after all, for the multiplicative domain it was enough for $\cl{E}$ to just be unital. The following example shows that both of these characterizations fail if we remove either of the trace-preservation or unital hypotheses.

\begin{exam}\label{exam:needTP}
	{\rm Let $\phi : M_3 \rightarrow M_3$ be the CP map defined by the following two Kraus operators in the standard basis $\{ \ket{0}, \ket{1}, \ket{2}\}$:
	\begin{align*}
		E_1 := \frac{1}{2}\begin{bmatrix}\sqrt{2} & 0 & 0 \\ 1 & 0 & 1 \\ 0 & 0 & \sqrt{2}\end{bmatrix} \quad \quad E_2 := \frac{1}{2}\begin{bmatrix}\sqrt{2} & 0 & 0 \\ -1 & 0 & -1 \\ 0 & 0 & \sqrt{2} \end{bmatrix}.
	\end{align*}
	
\noindent It is not difficult to verify that this map is unital but not trace-preserving.
	
	Now define $a := \ketbra{0}{0} + \frac{5}{2}\ketbra{1}{1} + 3\ketbra{2}{2}$. Let $\pi$ simply be the identity map. Then some algebra reveals that
	\begin{align*}
		\pi(a)\phi(a) = \phi(a^2) = \phi(a)\pi(a).
	\end{align*}
	
\noindent However, taking $b = \ketbra{0}{0}$ gives us that
	\begin{align*}
		\phi(ba) = \ketbra{0}{0} + \frac{1}{2}\ketbra{1}{1} \neq \phi(b)\pi(a) = \ketbra{0}{0} + \frac{5}{4}\ketbra{1}{1}.
	\end{align*}
	
	Similarly, to consider the internal characterization of Theorem~\ref{thm:genChar2}, consider the same map $\phi$ but set $a := \ketbra{0}{0} + 2\ketbra{1}{1} + 3\ketbra{2}{2}$. Let $\pi$ again be the identity map. Then
	\begin{align*}
		\pi(a)\phi(I) = \phi(a) = \phi(I)\pi(a) = \ketbra{0}{0} + 2\ketbra{1}{1} + 3\ketbra{2}{2}.
	\end{align*}
	
\noindent However, again taking $b = \ketbra{0}{0}$ gives us that
	\begin{align*}
		\phi(ba) = \ketbra{0}{0} + \frac{1}{2}\ketbra{1}{1} \neq \phi(b)\pi(a) = \ketbra{0}{0} + \ketbra{1}{1}.
	\end{align*}
	
	To instead see that trace-preservation of $\phi$ without it being unital is not sufficient for Theorems~\ref{thm:genChar} and~\ref{thm:genChar2}, consider the same operators $a$ and $b$ and let $\pi$ be the identity map as before, but instead of the unital map $\phi$ use the CPTP map $\cl{E}$ defined by the following three Kraus operators:
	\begin{align*}
		E_1 := \frac{1}{\sqrt{2}}\begin{bmatrix}1 & 0 & 0 \\ 0 & \sqrt{2} & 0 \\ 0 & 0 & 1\end{bmatrix} \quad E_2 := \frac{1}{\sqrt{2}}\begin{bmatrix}0 & 0 & 0 \\ 1 & 0 & 0 \\ 0 & 0 & 0 \end{bmatrix} \quad E_3 := \frac{1}{\sqrt{2}}\begin{bmatrix}0 & 0 & 0 \\ 0 & 0 & 1 \\ 0 & 0 & 0 \end{bmatrix}.
	\end{align*}}
\end{exam}

\section{Representations in Quantum Error Correction}\label{sec:qec}

Theorem~\ref{thm:qec_main} below gave the initial motivation from quantum information for the investigation of the generalized multiplicative domains that have been introduced. The equivalence of conditions (1) and (2) was proved in \cite{CJK08}; condition (2) is presented here to provide a more complete picture, as it is a seemingly weaker condition than condition (3) in that (3) trivially implies (2). Additionally, condition (2) was our original motivation for investigating the internal characterization of generalized multiplicative domains that was presented in Theorem~\ref{thm:genChar2}.

The equivalence of (1) and (3) is in the same flavour as the commutant characterization of noiseless subsystems for unital CPTP maps given in \cite{HKL04}, where by ``noiseless'' we mean a correctable subsystem that is corrected by the map $\mathcal{R} \equiv {\rm id}$. In fact, by taking $\pi = {\rm id}$, it can be thought of as a generalization of Theorem~1 of \cite{CK06}, which states that if $\cl{C} = \cl{A} \otimes \cl{B}$ is a subspace of $\cl{H}$ then $\cl{B}$ is noiseless for $\cl{E}$ if and only if $aE_iP_\cl{C} - E_ia = P_\cl{C}E_i^*a - aE_i^* = 0$ for all $a \in \fA_\cl{B}$ and all $i$, where $\cl{E} \equiv \{ E_i \}$. This agrees with our representation-theoretic picture of error correction, as in \cite{CJK08} it was noted that $\pi^\dagger$ acts as a recovery operation when restricted to $\cl{E}(\fA_\cl{B})$, so the recovery operation is just the identity in the noiseless subsystem case, as it should be. This also shows that $\cl{E}$-bimodules are exactly noiseless subsystems, as we saw earlier that $\pi = {\rm id}$ corresponds to the case when generalized multiplicative domains are bimodules.

The equivalence of (1) and (4) is our primary motivation for investigating generalized multiplicative domains. In much the same way that the standard multiplicative domain encodes the unitarily correctable codes of unital CPTP maps, we see that the generalized multiplicative domains encode {\em all} correctable codes for {\em arbitrary} CPTP maps.

\begin{thm}\label{thm:qec_main}
	Let $\cl{E} \equiv \{E_i\} : \cl{L}(\cl{H}) \rightarrow \cl{L}(\cl{H})$ be a CPTP map and let $\cl{C} = \cl{A} \otimes \cl{B}$ be a subspace of $\cl{H}$. Then the following are equivalent:
 	\begin{enumerate}
		\item $\cl{B}$ is a correctable subsystem for $\cl{E}$,
		\item $\exists$ a representation $\pi : \fA_\cl{B} \rightarrow \cl{L}(\cl{H})$ such that
		\[
		  \pi(a)\cl{E}(P_\cl{C}) = \cl{E}(P_\cl{C})\pi(a) = \cl{E}(a) \quad \forall \, a \in \fA_\cl{B},
		\]
		\item $\exists$ a representation $\pi : \fA_\cl{B} \rightarrow \cl{L}(\cl{H})$ such that
		\[
		  \pi(a)E_iP_\cl{C} - E_ia = P_\cl{C}E_i^*\pi(a) - aE_i^* = 0 \quad \forall \, a \in \fA_\cl{B}, \forall \, i,
		\]
		\item $\exists$ a representation $\pi : \fA_\cl{B} \rightarrow \cl{L}(\cl{H})$ such that
		\[ MD_\pi(\cl{E}) = \fA_\cl{B}. \]
	\end{enumerate}
\end{thm}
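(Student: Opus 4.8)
The plan is to prove the cycle of implications $(4) \Rightarrow (2) \Rightarrow (3) \Rightarrow (1)$, relying on the already-established equivalence of $(1)$ and $(2)$ from \cite{CJK08} to close the loop, and on Theorem~\ref{thm:genChar2} to connect the multiplicative-domain condition to the operator identities. Since $(3) \Rightarrow (2)$ is essentially immediate (multiply the $i$-th identity by $E_i^*$ on the appropriate side, sum over $i$, and recall $\cl{E}(X) = \sum_i E_i X E_i^*$), the real content is $(4) \Rightarrow (3)$ and $(2) \Rightarrow (1)$; the latter is exactly what is cited from \cite{CJK08}, so I would focus the write-up on $(4) \Leftrightarrow (2)$ and $(2) \Rightarrow (3)$, then remark that $(3) \Rightarrow (2)$ and invoke \cite{CJK08}.

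First I would observe that for the algebra $\fA = \fA_\cl{B}$ sitting inside $\cl{L}(\cl{C}) \oplus 0_{\cl{C}^\perp}$, the unit element is $1_\fA = P_\cl{C}$. Hence Theorem~\ref{thm:genChar2}, applied with $\cl{E}$ the given (a priori not unital) CPTP map, says that $MD_\pi(\cl{E})$ consists of those $a \in \fA_\cl{B}$ with $\cl{E}(P_\cl{C})\pi(a) = \cl{E}(a) = \pi(a)\cl{E}(P_\cl{C})$. Comparing this with condition $(2)$, we see that $(2)$ holds for a given $\pi$ precisely when \emph{every} $a \in \fA_\cl{B}$ lies in $MD_\pi(\cl{E})$, i.e.\ $MD_\pi(\cl{E}) = \fA_\cl{B}$, which is condition $(4)$. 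So $(2) \Leftrightarrow (4)$ is a direct translation once Theorem~\ref{thm:genChar2} is in hand. The one subtlety I would need to address is that Theorem~\ref{thm:genChar2} as stated assumes $\cl{E}$ is \emph{unital}, whereas here $\cl{E}$ need only be CPTP; I would need to check that the proof of Theorem~\ref{thm:genChar2} (and the proof of Theorem~\ref{thm:genChar} it leans on) only ever uses unitality of $\cl{E}$ relative to the \emph{subalgebra} $\fA$ — i.e.\ the relevant compressions behave correctly — or alternatively reduce to the compressed channel on $\cl{C}$. This compatibility check is where I expect the main friction to be.

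For $(2) \Rightarrow (3)$: given the representation $\pi$ with $\pi(a)\cl{E}(P_\cl{C}) = \cl{E}(P_\cl{C})\pi(a) = \cl{E}(a)$ for all $a \in \fA_\cl{B}$, I would run the standard Cauchy–Schwarz/positivity argument for multiplicative domains. Writing $\cl{E} \equiv \{E_i\}$ and applying complete positivity (e.g.\ via a Stinespring dilation $V$ as in the proof of Theorem~\ref{thm:genChar}, with $V^* = [E_1 \ \cdots \ E_k]$), the equality $\cl{E}(a^*a) = \cl{E}(a)^*\cl{E}(P_\cl{C})^{-1}\cl{E}(a)$-type relation (or more directly the equality case in $\cl{E}(a^*a) \ge \cl{E}(a)\cdots$) forces, column by column, $E_i a = \pi(a) E_i P_\cl{C}$ for each $i$, and symmetrically $a E_i^* = P_\cl{C} E_i^* \pi(a)$. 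This is the familiar phenomenon that equality in an operator Cauchy–Schwarz inequality propagates to each Kraus operator; I would cite or reprove it in the form needed, taking care that $P_\cl{C}$ appears in the right places because $a, \pi(a)$ are only partial isometric-type data supported on $\cl{C}$ and its image. Finally, with $(3) \Rightarrow (2)$ noted as above and $(2) \Leftrightarrow (1)$ quoted from \cite{CJK08}, the four conditions are equivalent. The anticipated hard part, to restate, is not any single inequality but verifying that the non-unital setting here is genuinely covered by Theorems~\ref{thm:genChar} and~\ref{thm:genChar2} — that the trace-preservation and ``unital-on-$\fA$'' bookkeeping goes through when $\cl{E}(P_\cl{C}) \neq I$.
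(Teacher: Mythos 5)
Your overall architecture is salvageable, and citing the equivalence of (1) and (2) from \cite{CJK08} is legitimate (the paper does the same), but the step you lean on hardest --- ``$(2)\Leftrightarrow(4)$ is a direct translation once Theorem~\ref{thm:genChar2} is in hand'' --- has a genuine gap. Theorem~\ref{thm:genChar2} assumes $\cl{E}$ is \emph{unital}, whereas here $\cl{E}$ is only CPTP, and this hypothesis is not removable: Example~\ref{exam:needTP} exhibits a trace-preserving, non-unital map, $\pi = {\rm id}$, and an element $a$ with $\pi(a)\cl{E}(1_\fA) = \cl{E}(a) = \cl{E}(1_\fA)\pi(a)$ but $a \notin MD_\pi(\cl{E})$. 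So the element-wise equivalence you assert (``$a$ satisfies the identity in (2) iff $a \in MD_\pi(\cl{E})$'') is false in the generality you need, and the ``friction'' you flag --- hoping the proofs of Theorems~\ref{thm:genChar} and~\ref{thm:genChar2} only use unitality relative to $\fA$ --- cannot be resolved that way: those proofs use global unitality \emph{and} trace preservation essentially (the minimal-eigenvalue invariance argument), and in the counterexample $1_\fA = I$, so no compression to $\cl{C}$ rescues it. What survives is the implication with the \emph{full} hypothesis (2) (all $a$ simultaneously), but that needs a different argument, which you have not supplied. Your $(2)\Rightarrow(3)$ sketch is likewise incomplete as written: $\cl{E}(P_\cl{C})$ need not be invertible, and ``equality in Cauchy--Schwarz'' is here only a heuristic. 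It \emph{can} be made rigorous: for a unitary $u \in \fA_\cl{B}$ (i.e.\ $u^*u = uu^* = P_\cl{C}$), condition (2) gives $\pi(u)\cl{E}(P_\cl{C}) = \cl{E}(P_\cl{C})\pi(u)$ and $\pi(P_\cl{C})\cl{E}(P_\cl{C}) = \cl{E}(P_\cl{C})$, whence $\sum_i \bigl(\pi(u)E_iP_\cl{C} - E_iu\bigr)\bigl(\pi(u)E_iP_\cl{C} - E_iu\bigr)^* = \cl{E}(P_\cl{C}) - \pi(u)\cl{E}(P_\cl{C})\pi(u)^* = 0$, so each summand vanishes; spanning $\fA_\cl{B}$ by its unitaries gives the first identity in (3), and the second is its adjoint applied to $a^*$. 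But that argument is not in your write-up, and it, not Theorem~\ref{thm:genChar2}, is what would carry the weight.

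The paper's route is different and avoids the internal characterizations entirely: it proves $(1)\Rightarrow(3)$ from the structure theorem for correctable subsystems (Theorem~6 of \cite{CJK08}), which gives $\cl{E}(b_\cl{A}\otimes b_\cl{B}) = \sum_{i,j} U_i(N_{i,j}b_\cl{A}N_{i,j}^* \otimes b_\cl{B})U_i^*$ with $P_\cl{C}U_i^*U_jP_\cl{C} = \delta_{ij}P_\cl{C}$, and defines $\pi(a) := \sum_l U_l a U_l^*$; then $(3)\Rightarrow(4)$ by multiplying $\pi(a)E_kP_\cl{C} = E_k a$ on the right by $bE_k^*$ and summing over $k$ (no unitality needed), $(4)\Rightarrow(2)$ by taking $b = P_\cl{C}$, and $(2)\Rightarrow(1)$ as in \cite{CJK08}. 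If you wish to keep your architecture, replace the appeal to Theorem~\ref{thm:genChar2} by (i) the rigorous sum-of-squares version of your $(2)\Rightarrow(3)$ step indicated above and (ii) the elementary multiply-and-sum step for $(3)\Rightarrow(4)$; as it stands, condition (4) is connected to the others only through an inapplicable theorem.
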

\begin{proof}
	We prove the theorem via the implications (1) $\Rightarrow$ (3) $\Rightarrow$ (4) $\Rightarrow$ (2) $\Rightarrow$ (1).
	
	To prove that (1) $\Rightarrow$ (3), recall from Theorem~6 of \cite{CJK08} that there exist unitaries $\big\{ U_i \big\}$ and operators $N_{i,j}$ such that $P_\cl{C}U_i^*U_jP_\cl{C} = \delta_{ij}P_\cl{C}$ and
	\[
	  \cl{E}(b_\cl{A} \otimes b_\cl{B}) \equiv \sum_{i,j}U_i(N_{i,j}b_\cl{A}N_{i,j}^* \otimes b_\cl{B})U_i^*.
	\]
\noindent Let $a = (I_\cl{A} \otimes a_\cl{B}) \in \fA_\cl{B}$. Defining $\pi(a) := \sum_l U_l a U_l^* $ we see that
	\begin{align*}
		\pi(a)U_i(N_{i,j} \otimes I_\cl{B})P_\cl{C} & = \sum_l U_l (I_\cl{A} \otimes a_\cl{B}) U_l^* U_i(N_{i,j} \otimes I_\cl{B})P_\cl{C} \\
		& = U_i (I_\cl{A} \otimes a_\cl{B}) (N_{i,j} \otimes I_\cl{B})P_\cl{C} \\
		& = U_i (N_{i,j} \otimes I_\cl{B})P_\cl{C} (I_\cl{A} \otimes a_\cl{B}).		
	\end{align*}
	
\noindent It is well-known that two sets of Kraus operators define the same CP map if and only if their linear spans are equal. It follows that there exist constants $c_{i,j,k}$ such that if $\cl{E} \equiv \big\{ E_k \big\}$, then
	\begin{align*}
		E_k P_\cl{C} = \sum_{i,j}c_{i,j,k}U_i(N_{i,j} \otimes I_\cl{B})P_\cl{C}.
	\end{align*}
	
\noindent It then follows immediately that
	\begin{align}\label{eq:qec_main_01}
		\pi(a)E_k P_\cl{C} = E_k a.
	\end{align}
	
\noindent The other equality is proved in a similar manner. \\
	
	To see that (3) $\Rightarrow$ (4), simply multiply Equation~\eqref{eq:qec_main_01} on the right by $bE_k^*$, where $b \in \fA_{\cl{B}}$ is arbitrary. Then summing over all $k$ gives us for all $b \in \fA_\cl{B}$,
	\begin{align*}
		\pi(a)\cl{E}(b) = \pi(a)\sum_k E_kbE_k^* = \sum_k E_kabE_k^* = \cl{E}(ab).
	\end{align*}
	
\noindent The proof of the other necessary equality is obvious. \\
	
	To see that (4) $\Rightarrow$ (2), simply take $b = P_\cl{C}$ in $MD_\pi(\cl{E})$. \\

	The following proof that (2) $\Rightarrow$ (1) was originally presented in \cite{CJK08}, but is provided here for completeness.
	
To see (2) $\Rightarrow$ (1), we show that the algebra
$\fA_\mathcal{B}$ may be precisely corrected, which is equivalent
to correcting the subsystem $\mathcal{B}$ (see Theorem~3.2 of
\cite{KLPL06} for instance). First note that the representation
$\pi$ defines a subspace and subsystems $\mathcal{C}' =
\mathcal{A}' \otimes \mathcal{B}'$ with $\mathcal{B}'$ the same
dimension as $\mathcal{B}$ and an isometry $V: \mathcal{B}
\rightarrow \mathcal{B}'$ such that
\[
\pi(I_\mathcal{A}\otimes \rho_\mathcal{B}) = I_{\mathcal{A}'}
\otimes \mathcal{V} (\rho_\mathcal{B}) \quad \forall
\rho_\mathcal{B},
\]
where $\mathcal{V} (\rho_\mathcal{B}) = V \rho_\mathcal{B}
V^*$. Further, as $\mathcal{E} (P_\mathcal{C})$ commutes
with $\pi(\fA_\mathcal{B})$, it follows that $P_{\mathcal{C}'}
\mathcal{E} (P_\mathcal{C}) P_{\mathcal{C}'} =
\sigma_{\mathcal{A}'} \otimes I_{\mathcal{B}'}$ for some positive
operator $\sigma_{\mathcal{A}'}\in\mathcal{L}(\mathcal{A}')$ with
trace equal to $\dim\mathcal{C}$. Thus we have for all
$\rho_\mathcal{B}$,
\begin{eqnarray*}
\mathcal{E}(I_\mathcal{A}\otimes \rho_\mathcal{B}) &=&
\pi(I_\mathcal{A}\otimes
\rho_\mathcal{B})\mathcal{E}(P_\mathcal{C}) \\ &=&
(I_{\mathcal{A}'}\otimes
\mathcal{V}(\rho_\mathcal{B}))(\sigma_{\mathcal{A}'}\otimes
I_{\mathcal{B}'}) \\ &=& \sigma_{\mathcal{A}'} \otimes
\mathcal{V}(\rho_\mathcal{B}).
\end{eqnarray*}
Now define a CPTP map $\mathcal{R}$ on $\mathcal{H}$ such that
$\mathcal{R} \circ \mathcal{P}_{\mathcal{C}'} =
(\mathcal{D}_{\mathcal{A}|\mathcal{A}'} \otimes
\mathcal{V}^\dagger) \circ \mathcal{P}_{\mathcal{C}'}$, where
$\mathcal{D}_{\mathcal{A}|\mathcal{A}'}$ is the completely
depolarizing map from $\mathcal{A}'$ to $\mathcal{A}$, and it
follows that
$(\mathcal{R}\circ\mathcal{E})(I_\mathcal{A}\otimes\rho_\mathcal{B})
= I_\mathcal{A} \otimes \rho_\mathcal{B}$ for all
$\rho_\mathcal{B}$. This shows $\fA_\mathcal{B}$ can be exactly
corrected, and completes the proof.
\end{proof}

It is worth noting that we can actually use this error correction result to obtain another internal characterization of generalized multiplicative domains. Indeed, much like we saw an internal characterization in the same flavour as condition (2) of Theorem~\ref{thm:qec_main}, we now present an internal characterization based on condition (3). Note that, unlike the previous internal characterizations, this result holds for arbitrary (i.e., not necessarily unital) CPTP maps.

\begin{cor}\label{cor:genChar2} Let $\cl{E} : \cl{L}(\cl{H}) \rightarrow \cl{L}(\cl{H})$ be a CPTP map and let $\pi : \fA \rightarrow \cl{L}(\cl{H})$ be a representation with $\fA$ a subalgebra of $\cl{L}(\cl{H})$. Then
\[
	MD_{\pi}(\cl{E}) = \Big\{ a \in \fA : \pi(a)E_iP_\cl{C} - E_ia = P_\cl{C}E_i^*\pi(a) - aE_i^* = 0 \quad \forall \, i \Big\}.
\]
\end{cor}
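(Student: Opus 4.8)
The plan is to prove the two inclusions directly at the level of Kraus operators, avoiding the Stinespring-dilation and invariant-subspace argument used for Theorem~\ref{thm:genChar}. The point of doing so is that a Kraus-level ``completion of squares'' works for an arbitrary CP map, so neither unitality nor (in fact) even trace-preservation will be needed. Throughout I write $P_\cl{C}$ for the unit $1_\fA$ of $\fA$, so that $P_\cl{C}b = b = bP_\cl{C}$ for every $b \in \fA$, and I fix a Kraus decomposition $\cl{E} \equiv \{E_i\}$.

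The inclusion ``$\supseteq$'' is routine, and I would argue it exactly as in the step (3) $\Rightarrow$ (4) of Theorem~\ref{thm:qec_main}. Given $a \in \fA$ with $\pi(a)E_iP_\cl{C} = E_ia$ and $P_\cl{C}E_i^*\pi(a) = aE_i^*$ for all $i$, and an arbitrary $b \in \fA$, I insert $P_\cl{C}$ next to $b$ (legitimate since $P_\cl{C}b = b = bP_\cl{C}$), substitute the two hypothesised identities, and sum over $i$: this gives $\pi(a)\cl{E}(b) = \sum_i E_iabE_i^* = \cl{E}(ab)$ and $\cl{E}(b)\pi(a) = \sum_i E_ibaE_i^* = \cl{E}(ba)$, so $a \in MD_\pi(\cl{E})$.

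For the inclusion ``$\subseteq$'', let $a \in MD_\pi(\cl{E})$. The first move is to feed $b = P_\cl{C} = 1_\fA$ into the defining identities of $MD_\pi(\cl{E})$, obtaining $\pi(a)\cl{E}(P_\cl{C}) = \cl{E}(a) = \cl{E}(P_\cl{C})\pi(a)$; since $MD_\pi(\cl{E})$ is a $C^*$-algebra, $a^*$ also lies in it, so the same substitution gives $\cl{E}(a^*) = \cl{E}(P_\cl{C})\pi(a)^*$ and hence $\cl{E}(aa^*) = \pi(a)\cl{E}(a^*) = \pi(a)\cl{E}(P_\cl{C})\pi(a)^*$. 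The key step is then to expand the manifestly positive operator
\[
  \sum_i \big(E_ia - \pi(a)E_iP_\cl{C}\big)\big(E_ia - \pi(a)E_iP_\cl{C}\big)^*,
\]
stripping projections via $aP_\cl{C} = a$ and $P_\cl{C}a^* = a^*$: the four resulting sums are $\cl{E}(aa^*)$, $-\cl{E}(a)\pi(a)^*$, $-\pi(a)\cl{E}(a^*)$, and $\pi(a)\cl{E}(P_\cl{C})\pi(a)^*$, and by the identities just derived every one of them equals $\pi(a)\cl{E}(P_\cl{C})\pi(a)^*$, so the whole sum vanishes. A sum of positive semidefinite operators can vanish only if each summand does, so $E_ia = \pi(a)E_iP_\cl{C}$ for every $i$; applying this conclusion to $a^*$ (also in $MD_\pi(\cl{E})$) and taking adjoints yields $aE_i^* = P_\cl{C}E_i^*\pi(a)$ for every $i$. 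Thus $a$ lies in the right-hand set, and the two sets coincide.

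The main obstacle is conceptual rather than computational: one must notice that, in contrast to Theorems~\ref{thm:genChar} and \ref{thm:genChar2} (whose proofs genuinely use the unital hypothesis), here it suffices to reduce to the single instance $b = 1_\fA$ and then exhibit a positive operator whose vanishing is equivalent to the desired Kraus-level identities. Once that is seen, the only care required is bookkeeping --- keeping track of which side each $P_\cl{C}$ sits on while matching the four sums against $\pi(a)\cl{E}(P_\cl{C})\pi(a)^*$ --- and the conclusion then holds for every (not necessarily unital) CPTP map, indeed for every CP map.
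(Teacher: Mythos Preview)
Your proof is correct, and for the ``$\supseteq$'' inclusion it coincides with the paper's argument. For ``$\subseteq$'', however, you take a genuinely different route. The paper's proof appeals to the structure theorem for $C^*$-algebras to reduce to a single summand $\fA_\cl{B}$, and then invokes Theorem~\ref{thm:qec_main} (whose (1)$\Rightarrow$(3) step in turn rests on a structural result from \cite{CJK08} about correctable subsystems) to extract the Kraus-level identities. Your argument is entirely self-contained: substituting $b = 1_\fA$ and $b = a^*$ into the defining relations of $MD_\pi(\cl{E})$ and then expanding the positive operator $\sum_i(E_ia - \pi(a)E_iP_\cl{C})(E_ia - \pi(a)E_iP_\cl{C})^*$ forces each summand to vanish. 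This completion-of-squares approach is both more elementary --- it avoids the quantum-error-correction machinery altogether --- and strictly more general, since nothing in it uses trace-preservation (or unitality), so the corollary holds for arbitrary CP maps $\phi$, not just CPTP ones. The paper's route, by contrast, ties the result to Theorem~\ref{thm:qec_main} and hence requires trace-preservation; its virtue is rather that it makes the error-correction interpretation explicit.
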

\begin{proof}
	For the ``$\subseteq$'' direction of the proof, first recall that $MD_{\pi}(\cl{E})$ has a structure as in Eq.~(\ref{opalgform}). Then, restricting $\pi$ to one particular summand $\fA_\cl{B}$ gives $MD_\pi(\cl{E}) = \fA_\cl{B}$. Theorem~\ref{thm:qec_main} then implies the required equalities.
	
	The ``$\supseteq$'' direction of the proof comes from multiplying $\pi(a)E_iP_\cl{C} - E_ia = 0$ on the right by $bE_i^*$ and $P_\cl{C}E_i^*\pi(a) - aE_i^* = 0$ on the left by $E_i b$, and then summing over $i$ to obtain
	\begin{align*}
		\pi(a)\cl{E}(b) = \cl{E}(ab) \ \text{ and } \ \cl{E}(b)\pi(a) = \cl{E}(ba).
	\end{align*}
\end{proof}

As one final note, we show how Theorem~\ref{thm:qec_main} applies to the previously-introduced error correction example from Section~\ref{sec:qecprelim}.
\begin{exam}
{\rm Recalling the CPTP map $\cl{E}$ and correctable subspace $\cl{C}_0$ from Example~\ref{exam:ec01}, consider the map defined by the following four Kraus operators:
\begin{align*}
	\pi := \Big\{ P_{\cl{C}_0}, \ \ P_{\cl{C}_1}(X \otimes I \otimes I), \ \ P_{\cl{C}_2}(I \otimes X \otimes I), \ \ P_{\cl{C}_3}(I \otimes I \otimes X) \Big\}.
\end{align*}

\noindent It is relatively simple (if not somewhat labourious) to verify that $\pi : \cl{L}(\cl{C}_0) \rightarrow \cl{L}(\cl{H})$ is in fact a representation and satisfies $MD_\pi(\cl{E}) = \cl{L}(\cl{C}_0)$, as well as the other conditions of Theorem~\ref{thm:qec_main}. The fact that $\pi = \cl{R}^\dagger$ is no coincidence; it can be seen by the proof of the implication (1) $\Rightarrow$ (3) in Theorem~\ref{thm:qec_main} that the dual of the correction operation, when restricted to $\fA_\cl{B}$, is equal to the representation mentioned in the theorem.}
\end{exam}

\section{Outlook}\label{sec:outlook}

We see this work as adding to the increasing number of connections between operator theory and quantum information. We showed that generalized multiplicative domains can be used to characterize correctable codes for CPTP maps. Various ways of characterizing generalized multiplicative domains were derived, although we required trace-preservation of the map under consideration in all cases. An intuitive reason for {\em why} the extra hypotheses on Theorem~\ref{thm:genChar} are required remains elusive.

Our proofs of Theorems~\ref{thm:genChar} and \ref{thm:genChar2} rely on operators having a finite number of eigenvalues, and our approach in general relies heavily on the representation theory of finite-dimensional C$^*$-algebras. Thus it is not clear whether our results extend to more general C$^*$-algebras. In particular, infinite-dimensional and bimodule extensions of our results would be of interest.

Some other open problems and potential lines of investigation that arise from this work from a quantum information perspective include using Theorem~\ref{thm:qec_main} to find new classes of codes for noise models relevant to quantum information processing, and further exploring generalizations of known results about bimodule maps and the multiplicative domain to this new setting.

\vspace{0.1in}

\noindent{\bf Acknowledgements.} We thank Man-Duen Choi for helpful conversations. We are also grateful to the Fields Institute for kind hospitality during its 2009 Summer Thematic Program on Mathematics in Quantum Information. N.J. was supported by an NSERC Canada Graduate
Scholarship and the University of Guelph Brock Scholarship. D.W.K.
was supported by NSERC Discovery Grant 400160, NSERC Discovery Accelerator
Supplement 400233, and Ontario Early Researcher Award 048142.


\end{document}